\newtheorem{definition}{Definition}
\newtheorem{proposition}{Proposition}
\newtheorem{lemma}[proposition]{Lemma}
\def\squareforqed{\hbox{\rlap{$\sqcap$}$\sqcup$}}
\def\qed{\ifmmode\squareforqed\else{\unskip\nobreak\hfil
\penalty50\hskip1em\null\nobreak\hfil\squareforqed
\parfillskip=0pt\finalhyphendemerits=0\endgraf}\fi}
\def\endenv{\ifmmode\;\else{\unskip\nobreak\hfil
\penalty50\hskip1em\null\nobreak\hfil\;
\parfillskip=0pt\finalhyphendemerits=0\endgraf}\fi}
\newenvironment{proof}{\noindent \textbf{{Proof~} }}{\hfill $\blacksquare$}
\newcounter{remark}
\newenvironment{remark}[1][]{\refstepcounter{remark}\par\medskip\noindent%
\textbf{Remark~\theremark #1} }{\medskip}
\newcounter{example}
\mathchardef\ordinarycolon\mathcode`\:
\def\vcentcolon{\mathrel{\mathop\ordinarycolon}}
\newmdenv[skipabove=7pt,
skipbelow=7pt,
backgroundcolor=darkblue!15,
innerleftmargin=5pt,
innerrightmargin=5pt,
innertopmargin=5pt,
leftmargin=0cm,
rightmargin=0cm,
innerbottommargin=5pt,
linewidth=1pt]{tBox}
\newmdenv[skipabove=7pt,
skipbelow=7pt,
backgroundcolor=blue2!25,
innerleftmargin=5pt,
innerrightmargin=5pt,
innertopmargin=5pt,
leftmargin=0cm,
rightmargin=0cm,
innerbottommargin=5pt,
linewidth=1pt]{dBox}
\newmdenv[skipabove=7pt,
skipbelow=7pt,
backgroundcolor=darkkblue!15,
innerleftmargin=5pt,
innerrightmargin=5pt,
innertopmargin=5pt,
leftmargin=0cm,
rightmargin=0cm,
innerbottommargin=5pt,
linewidth=1pt]{sBox}
\definecolor{darkblue}{RGB}{0,76,156}
\definecolor{darkkblue}{RGB}{0,0,153}
\definecolor{blue2}{RGB}{102,178,255}
\definecolor{darkred}{RGB}{195,0,0}
\newcommand{\nc}{\newcommand}
\nc{\rnc}{\renewcommand}
\nc{\beg}{\begin{equation}}
\nc{\eeq}{{\end{equation}}}
\nc{\beqa}{\begin{eqnarray}}
\nc{\eeqa}{\end{eqnarray}}
\nc{\lbar}[1]{\overline{#1}}
\nc{\bra}[1]{\langle#1|}
\nc{\ket}[1]{|#1\rangle}
\nc{\ketbra}[2]{|#1\rangle\!\langle#2|}
\nc{\braket}[2]{\langle#1|#2\rangle}
\nc{\proj}[1]{| #1\rangle\!\langle #1 |}
\nc{\avg}[1]{\langle#1\rangle}
\nc{\rank}{\operatorname{Rank}}
\nc{\smfrac}[2]{\mbox{$\frac{#1}{#2}$}}
\nc{\tr}{\operatorname{Tr}}
\nc{\ox}{\otimes}
\nc{\dg}{\dagger}
\nc{\dn}{\downarrow}
\nc{\cA}{{\cal A}}
\nc{\cB}{{\cal B}}
\nc{\cC}{{\cal C}}
\nc{\cD}{{\cal D}}
\nc{\cE}{{\cal E}}
\nc{\cF}{{\cal F}}
\nc{\cG}{{\cal G}}
\nc{\cH}{{\cal H}}
\nc{\cI}{{\cal I}}
\nc{\cJ}{{\cal J}}
\nc{\cK}{{\cal K}}
\nc{\cL}{{\cal L}}
\nc{\cM}{{\cal M}}
\nc{\cN}{{\cal N}}
\nc{\cO}{{\cal O}}
\nc{\cP}{{\cal P}}
\nc{\cQ}{{\cal Q}}
\nc{\cR}{{\cal R}}
\nc{\cS}{{\cal S}}
\nc{\cT}{{\cal T}}
\nc{\cV}{{\cal V}}
\nc{\cX}{{\cal X}}
\nc{\cY}{{\cal Y}}
\nc{\cZ}{{\cal Z}}
\nc{\cW}{{\cal W}}
\nc{\csupp}{{\operatorname{csupp}}}
\nc{\qsupp}{{\operatorname{qsupp}}}
\nc{\var}{{\operatorname{var}}}
\nc{\rar}{\rightarrow}
\nc{\lrar}{\longrightarrow}
\nc{\polylog}{{\operatorname{polylog}}}
\nc{\wt}{{\operatorname{wt}}}
\nc{\av}[1]{{\left\langle {#1} \right\rangle}}
\nc{\supp}{{\operatorname{supp}}}
\nc{\argmin}{{\operatorname{argmin}}}
\def\x{\xi}
\nc{\RR}{{{\mathbb R}}}
\nc{\CC}{{{\mathbb C}}}
\nc{\FF}{{{\mathbb F}}}
\nc{\NN}{{{\mathbb N}}}
\nc{\ZZ}{{{\mathbb Z}}}
\nc{\PP}{{{\mathbb P}}}
\nc{\QQ}{{{\mathbb Q}}}
\nc{\UU}{{{\mathbb U}}}
\nc{\EE}{{{\mathbb E}}}
\nc{\id}{{\operatorname{id}}}
\nc{\CHSH}{{\operatorname{CHSH}}}
\nc{\be}{\begin{equation}}
\nc{\ee}{{\end{equation}}}
\nc{\bea}{\begin{eqnarray}}
\nc{\eea}{\end{eqnarray}}
\nc{\rU}{\mbox{U}}
\nc{\ob}[1]{#1}
\nc{\SEP}{{\text{\rm SEP}}}
\nc{\NS}{{\text{\rm NS}}}
\nc{\LOCC}{{\text{\rm LOCC}}}
\nc{\PPT}{{\text{\rm PPT}}}
\nc{\EXT}{{\text{\rm EXT}}}
\nc{\Sym}{{\operatorname{Sym}}}
\nc{\ERLO}{{E_{\text{r,LO}}}}
\nc{\ERLOCC}{{E_{\text{r,LOCC}}}}
\nc{\ERPPT}{{E_{\text{r,PPT}}}}
\nc{\ERLOCCinfty}{{E^{\infty}_{\text{r,LOCC}}}}
\nc{\Aram}{{\operatorname{\sf A}}}
\def\grd@save@target#1{%
  \def\grd@target{#1}}
\def\grd@save@start#1{%
  \def\grd@start{#1}}
\tikzset{
  grid with coordinates/.style={
    to path={%
      \pgfextra{%
        \edef\grd@@target{(\tikztotarget)}%
        \tikz@scan@one@point\grd@save@target\grd@@target\relax
        \edef\grd@@start{(\tikztostart)}%
        \tikz@scan@one@point\grd@save@start\grd@@start\relax
        \draw[minor help lines,magenta] (\tikztostart) grid (\tikztotarget);
        \draw[major help lines] (\tikztostart) grid (\tikztotarget);
        \grd@start
        \pgfmathsetmacro{\grd@xa}{\the\pgf@x/1cm}
        \pgfmathsetmacro{\grd@ya}{\the\pgf@y/1cm}
        \grd@target
        \pgfmathsetmacro{\grd@xb}{\the\pgf@x/1cm}
        \pgfmathsetmacro{\grd@yb}{\the\pgf@y/1cm}
        \pgfmathsetmacro{\grd@xc}{\grd@xa + \pgfkeysvalueof{/tikz/grid with coordinates/major step}}
        \pgfmathsetmacro{\grd@yc}{\grd@ya + \pgfkeysvalueof{/tikz/grid with coordinates/major step}}
        \foreach \x in {\grd@xa,\grd@xc,...,\grd@xb}
        \node[anchor=north] at (\x,\grd@ya) {\pgfmathprintnumber{\x}};
        \foreach \y in {\grd@ya,\grd@yc,...,\grd@yb}
        \node[anchor=east] at (\grd@xa,\y) {\pgfmathprintnumber{\y}};
      }
    }
  },
  minor help lines/.style={
    help lines,
    step=\pgfkeysvalueof{/tikz/grid with coordinates/minor step}
  },
  major help lines/.style={
    help lines,
    line width=\pgfkeysvalueof{/tikz/grid with coordinates/major line width},
    step=\pgfkeysvalueof{/tikz/grid with coordinates/major step}
  },
  grid with coordinates/.cd,
  minor step/.initial=.2,
  major step/.initial=1,
  major line width/.initial=2pt,
}
\def\problem@s{}
\newcounter{problems@cnt}
\newcommand{\allproblems}{\problem@s}
\definecolor{colorone}{rgb}{1,0.36,0.03}
\definecolor{colortwo}{rgb}{0.4,0.77,0.17}
\definecolor{colorthree}{rgb}{0.01,0.51,0.93}
\definecolor{colorfour}{rgb}{0.47,0.26,0.58}
\nc{\st}{\text{subject to} \ }
\nc{\supre}{\text{supremum} \ }
\nc{\sdp}{\text{sdp}}
\nc{\cU}{\mathcal U}
\newenvironment{proof-sketch}{\noindent{\it Sketch of Proof.}\hspace*{1em}}{\qed\bigskip}
\begin{document}
\title{A Hybrid Quantum-Classical Hamiltonian Learning Algorithm}
\author{Youle Wang}
\affiliation{Center for Quantum Software and Information,
University of Technology Sydney, NSW 2007, Australia}
\affiliation{Institute for Quantum Computing, Baidu Research, Beijing 100193, China}

\author{Guangxi Li}
\affiliation{Center for Quantum Software and Information,
University of Technology Sydney, NSW 2007, Australia}
\affiliation{Institute for Quantum Computing, Baidu Research, Beijing 100193, China}

\author{Xin Wang}
\affiliation{Institute for Quantum Computing, Baidu Research, Beijing 100193, China}
\begin{abstract}
Hamiltonian learning is central to studying complex many-body physics and the certification of quantum devices and simulators. How to learn the Hamiltonian in general with near-term quantum devices is a challenging problem. In this paper, we develop a hybrid quantum-classical Hamiltonian learning algorithm to tackle this problem. By transforming the Hamiltonian learning problem to an optimization problem using the Jaynes’ principle, we employ a gradient-descent method to give the solution and could reveal the interaction coefficients from the system's Gibbs state measurement results. In particular, the computation of the gradients relies on the Hamiltonian spectrum and the log-partition function. Hence, as the main subroutine, we develop a variational quantum algorithm to extract the Hamiltonian spectrum and utilize convex optimization to output the log-partition function. We also apply the importance sampling technique to circumvent the resource requirements for dealing with large-scale Hamiltonians. As a proof of principle, we demonstrate the effectiveness of our algorithm by conducting numerical experiments for randomly generated Hamiltonians and many-body Hamiltonians of theoretical and practical interest. 

\end{abstract}  
\maketitle

\section{Introduction} 
Hamiltonian learning is an important task in studying quantum physics systems and the experimental realization of quantum computers. For instance, it can predict the quantum system's locality to describe the effective interactions between particles, which plays a crucial role in quantum technology, such as quantum lattice models~\cite{Zanardi2002}, quantum simulation~\cite{SethLl2014}, and adiabatic quantum computation~\cite{Aharonov2008}. Moreover, with recent experimental advances in tools for studying complex interacting quantum systems~\cite{Arute2019}, it is becoming more and more essential to learn the dynamics of complicated physical systems, which can predict the evolution of any initial state governed by the Hamiltonian. Another critical utility is relevant to the verification of quantum devices and simulators towards building fault-tolerant quantum computers~\cite{Ladd2010} since certifying that the engineered Hamiltonian matches the theoretically predicted models will always be an indispensable step in developing high-fidelity quantum gates~\cite{Valenti2019}.

{Hamiltonian of many-body physics is often characterized by some parameters, which describe the interactions between the particles. Technically, a many-body Hamiltonian is composed of polynomially many local Pauli operators, i.e.,
\begin{align}
H=\sum_{\ell=1}^{m}\mu_{\ell}E_{\ell},    \label{localhamiltonian}
\end{align}
where $\bm\mu=(\mu_{1},\ldots,\mu_{m})\in[-1,1]^{m}$, and $\{E_{\ell}\}_{\ell=1}^{m}$ are $n$-qubit local Pauli operators, with $m=O(poly(n))$. Despite the number of these parameters $\bm\mu$ in general scales polynomially in the system's size, it is pretty challenging to learn these parameters. Classically characterizing the system's Hamiltonian via tomography would require resources that exponentially scale in the system's size~\cite{gross2010quantum}. Other than tomography, there are methods~\cite{Granade2012,Wiebe2014,Wiebe2014a,Wiebe2015,Wang2017} that cost polynomially many resources while requiring the ability to simulate the dynamics of the system, which is classically intractable. Moreover, it is difficult to perform quantum simulation as a large amount of low-decoherence and fully-connected qubits are required, which are not available on NISQ devices~\cite{Preskill2018}.}

{The major goal of this paper is to learn the many-body Hamiltonians using a trusted NISQ device.  For this purpose, we exploit the variational quantum algorithms (VQAs) that have been gaining popularity in many areas~\cite{xu2019variational,bravo2019variational,huang2019near,larose2019variational,Cerezoa,peruzzo2014variational,nakanishi2019subspace,Wang2020,Wang2020a,Sharma2020}. VQAs are a class of hybrid quantum-classical algorithms that are expected to be implementable on NISQ devices. The main process is to optimize a certain loss function via parameterized quantum circuits (PQCs). In particular, the loss function depending on parameters of the circuit is evaluated on quantum devices, and then the parameters are updated using gradient-based methods classically. As for Hamiltonian learning, we take advantage of the strategy proposed recently in \cite{Anshu2020}, which allows recovering parameters $\bm\mu$ from the measurement results of a quantum Gibbs state $\rho_{\beta}= e^{-\beta H}/\tr(e^{-\beta H})$, i.e., $e_{\ell}=\tr(\rho_{\beta}E_{\ell})$ for all $\ell=1,\ldots,m$. It has been shown that solving the optimization problem below suffices to complete the Hamiltonian learning task.
\begin{align}
    \bm\mu=\text{argmin}_{\bm\nu}\log Z_{\beta}(\bm\nu)+\beta\sum_{\ell=1}^m\nu_{\ell}e_{\ell}. \label{pre:intro_dual}
\end{align}
Here, $Z_{\beta}(\bm\nu)=\tr(e^{-\beta\sum_{\ell=1}^{m}\nu_{\ell}E_{\ell}})$ denotes the partition function, parameterized by $\bm\nu=(\nu_1,...,\nu_m)\in[-1,1]^{m}$, and $\beta$ denotes the inverse temperature of the system. }

{In this paper, we propose a hybrid quantum-classical algorithm to perform the Hamiltonian learning task, whose aim is to recover the interaction coefficients $\bm\mu$ from the measurement results $\{e_{\ell}\}_{\ell=1}^{m}$. The main idea is to solve the optimization problem in Eq.~\eqref{pre:intro_dual} by a gradient-descent method and compute the corresponding gradients utilizing variational quantum algorithms. The challenge of our approach is to compute the log-partition function $\log{Z}_{\beta}(\bm\nu)$ and its gradient since computing partition function is \#P-hard~\cite{goldberg2008inapproximability,long2010restricted}.
}

{To overcome this challenge, we accordingly develop a method based on the relation between the log-partition function and the system's free energy. In general, suppose the state of the system is $\rho$, then the free energy is given by $F(\rho)=\tr(H\rho)-\beta^{-1}S(\rho)$, where $S(\rho)$ is the von Neumann entropy. The relation states that the global minimum of $F(\rho)$ is proportional to the log-partition function, i.e.,
\begin{align}
    \log\tr(e^{-\beta H})=-\beta \min_{\rho}F(\rho).
\end{align}
To establish the results, our method for minimizing the free energy depends on two critical steps. First, we choose a suitable PQC with enough expressiveness and train it to learn the eigenvectors of the Hamiltonian and output the corresponding eigenvalues. Second, we combine the post-training PQC with the classical methods for convex optimization to find the global minimum of the free energy. Next, we utilize the post-training PQC and the optimizer of the convex optimization to compute the gradients. Furthermore, we theoretically analyze the estimation precision of the gradients. We also show the efficiency of loss evaluation and gradients estimation by the importance sampling technique when the underlying Hamiltonian is large.}

{As the proof of principle, we study the effectiveness of our algorithm for Hamiltonian learning by conducting numerical experiments for randomly generated Hamiltonians and several many-body Hamiltonians. To generate random Hamiltonians, we choose Pauli tensor products $E_{\ell}$ from the set $\{X,Y,Z,I\}^{\otimes n}$ at random, with $n$ ranging from 3 to 5. The target interaction coefficients are chosen via a uniform distribution over $[-1,1]$. The tested many-body Hamiltonians consist of Ising, $XY$-spin, and Heisenberg models, where size also varies from 3 to 5 qubits. For these Hamiltonians, we test our algorithm for different parameters $\beta$ and $\bm\mu$ with different lengths. As a result, the numerical results show that the target interaction coefficients can be estimated with high precision. In these experiments, our algorithm learns all eigenvalues of Hamiltonians. Moreover, we show the effectiveness by partially learning few smallest eigenvalues of Ising Hamiltonians. In particular, the circuit depth of used PQC could be significantly reduced. Finally, we also generalize the experiments to larger Ising Hamiltonians with 6/7 qubits.}

{Next, we summarize the contribution of this paper and all mentioned results above.
\begin{enumerate}\setlength{\itemsep}{-0.1cm}
\vspace{-0.2cm}
\item 
We propose a hybrid quantum-classical Hamiltonian learning framework based on the fundamental properties of free energy, which mainly consists of the following two subroutines: log-partition function estimation and stochastic variational quantum eigensolver (SVQE).
\item The main subroutine is the log-partition function estimation algorithm, which combines the SVQE with the classical convex optimization to minimize the free energy. 
\item We also propose a feasible scheme for learning the spectrum of the many-body Hamiltonian by integrating variational quantum algorithms with the importance sampling technique.
\item We demonstrate our algorithm's validity by numerical simulations on several random Hamiltonians and many-body Hamiltonians (e.g., Ising model, XY model, and Heisenberg model).
\vspace{-0.2cm}
\end{enumerate}
}

\textbf{Organization.}  The remaining paper proceeds as follows. In Sec.~\ref{sec_problem}, we formally define the problems we studied in this work; In Sec.~\ref{sec_hqhl}, we present the main results, including the Hamiltonian learning algorithm, and its main subroutines log-partition function estimation, stochastic variational quantum eigensolver, and gradient estimation; In Sec.~\ref{sec_numerical_sim}, we describe the experimental settings and provide numerical results to demonstrate the efficacy of our algorithm; Lastly, we conclude the paper in Sec.~\ref{sec_conclusion}. Proofs and more discussions are presented in the Supplementary Material.

\section{Problem Statement}\label{sec_problem}
In this paper, the goal of Hamiltonian learning is to learn the interaction coefficients $\bm\mu$ from the measurement results of a quantum Gibbs state. We assume that the Hamiltonian to be learned $H$ is composed of local Pauli operators $\{E_{\ell}\}_{\ell=1}^m$, and the measurements corresponding to $\{E_{\ell}\}_{\ell=1}^m$ are performed on the Gibbs state $\rho_{\beta}= e^{-\beta H}/\tr(e^{-\beta H})$ at an inverse temperature $\beta$. The measurement results are denoted by $\{e_{\ell}\}_{\ell=1}^m$, given by
\begin{align}
    e_{\ell}=\tr(\rho_{\beta}E_{\ell}),\quad\forall \ell\in[m].
\end{align}

Recently, there are many methods proposed to efficiently obtain measurement results $\{e_{\ell}\}_{\ell=1}^{m}$~\cite{Cotler2020,Bonet-Monroig2019,Huang2020}. We, therefore, assume the measurement results $\{e_{\ell}\}_{\ell=1}^m$ have been given previously and focus on learning interaction coefficients from them. Formally, we define the Hamiltonian learning problem (HLP) as follows:
\begin{definition}[HLP]
Consider a many-body Hamiltonian with a decomposition given in Eq.~\eqref{localhamiltonian}, where $|\mu_{\ell}|\leq1$ for all $\ell=1,...,m$. Suppose we are given measurement results $\{e_{\ell}\}_{\ell=1}^{m}$ of the quantum Gibbs state $\rho_{\beta}$, then the goal is to find an estimate $\widehat{\bm\mu}$ of $\bm\mu$ such that
\begin{align}
\parallel\widehat{\bm\mu}-\bm\mu\parallel_\infty\leq\epsilon,
\end{align}
where $\|\cdot\|_\infty$ means the maximum norm.
\end{definition}

To solve the HLP, we adopt a strategy that is proposed recently in Ref.~\cite{Anshu2020}, which transforms HLP into an optimization problem by using the Jaynes' principle (or maximal entropy principle)~\cite{jaynes1957information}. This strategy is to find a quantum state with the maximal entropy from all states whose measurement results under $\{E_{\ell}\}_{\ell=1}^m$ match $\{e_{\ell}\}_{\ell=1}^m$. 
\begin{align}
    \max_{\rho}~&\quad S(\rho) \label{optimizationproblem}\\
    s.t.~&~\tr(\rho E_{\ell})=e_{\ell},~\forall\ell=1,...,m\nonumber\\
    &~\rho>0,~\tr(\rho)=1.\nonumber
\end{align}
It has been shown in~\cite{jaynes1957information} that the optimal state is of the following form:
\begin{align}
    \sigma=\frac{\exp(-\beta\sum_{\ell=1}^m \mu_{\ell}^*E_{\ell})}{\tr(\exp(-\beta\sum_{\ell=1}^m \mu_{\ell}^*E_{\ell}))}.
\end{align}
Here, state $\sigma$ is a quantum Gibbs state of a Hamiltonian with interaction coefficients $\bm\mu^*=(\mu_1^*,...,\mu_m^*)$. As a result, Ref.~\cite{Anshu2020} shows that coefficients of $\sigma$ is the target interaction coefficients, i.e., $\bm\mu^*=\bm\mu$. Moreover, Ref.~\cite{Anshu2020} also points out an approach for obtaining $\bm\mu^*$ that is to solve the dual optimization problem in Eq.~\eqref{pre:intro_dual}. 

To this end, we develop a gradient-descent method to solve the problem in Eq.~\eqref{pre:intro_dual}. A flowchart for illustration is shown in Figure~\ref{fig:qahl}. Clearly, the main obstacle is to compute the corresponding gradients of the objective function, which involves computing the partition function. Then, we formalize the gradient estimation problem below.
\begin{definition}[Gradient estimation]
Given a Hamiltonian parameterized by coefficients $\bm\nu$, i.e., $H(\bm\nu)=\sum_{\ell=1}^{m}\nu_{\ell}E_{\ell}$, let $L(\bm\nu)$ be the objective function
\begin{align}
   L(\bm\nu)= \log Z_{\beta}(\bm\nu)+\beta\sum_{\ell=1}^m\nu_{\ell}e_{\ell},
\end{align}
where $Z_{\beta}(\bm\nu)=\tr(e^{-\beta H(\bm\nu)})$. Then the goal is to estimate the gradient $\nabla L(\bm\nu)$ with respect to $\bm\nu$.
\end{definition}
The following sections are devoted to solving HLP and the Gradient estimation problem.
\begin{figure}[t]
    \centering
    \footnotesize
\begin{tikzpicture}[node distance=10pt]
  \node[draw, rounded corners]                        (start)   {Start};
  \node[draw, below=of start]                         (step 1)  {Input $\beta$, $\{E_{\ell}\}_{\ell=1}^m$, and $\{e_{\ell}\}_{\ell=1}^{m}$};
  \node[draw, below=of step 1]                        (step 4)  {Initialize coefficients $\bm\nu$};
  \node[draw, below=of step 4]                        (step 2)  {Compute gradient $\nabla \log Z_{\beta}(\bm\nu)$};
  \node[draw, below=of step 2]                        (step 10)  {Update $\bm\nu$};
  \node[draw, diamond, aspect=2, below=of step 10]     (choice)  {Continue?};
  \node[draw, right=30pt of choice]                   (step x)  {Return};
  \node[draw, below=20pt of choice]  (end)     {Output final coefficients};
  \node[draw, rounded corners, below=of end]                         (step 3)  {End};
  
  \draw[->] (start)  -- (step 1);
  \draw[->] (step 1)-- (step 4);
  \draw[->] (step 4) -- (step 2);
  \draw[->] (step 2) -- (step 10);
  \draw[->] (step 10) -- (choice);
  \draw[->] (choice) -- node[left]  {No} (end);
  \draw[->] (choice) -- node[above] {Yes}  (step x);
  \draw[->] (step x) -- (step x|-step 2) -> (step 2);
  \draw[->] (end) -- (step 3);
\end{tikzpicture}
\caption{Flowchart of the gradient-descent method for Hamiltonian learning.}
\label{fig:qahl}
\end{figure}

\section{Main results}\label{sec_hqhl}
{This section presents the main results of this paper. Specifically, we first discuss the core idea and outline the framework for computing the log-partition function in Sec.~\ref{sec_lpfe}. In Sec.~\ref{sec_hd}, we provide a variational quantum algorithm for learning the eigenvectors of Hamiltonians. Based on the results in Sec~\ref{sec_lpfe}-\ref{sec_hd}, we then proceed to give the gradient estimation procedure in Sec.~\ref{outteroptimization}. Last, Sec.~\ref{sec_hla} provides the main algorithm, the hybrid quantum-classical Hamiltonian learning algorithm (HQHL).}

\subsection{Log-partition function estimation} \label{sec_lpfe}
Here, we consider computing the log-partition function $\log Z_{\beta}(\bm\nu)$. Motivating our method is the relationship between the log-partition function and free energy. Recall that free energy of the system being state $\rho$ is given by $F(\rho)=\tr(H(\bm\nu)\rho)-\beta^{-1} S(\rho)$, assuming the parameterized Hamiltonian is $H(\bm\nu)=\sum_{\ell=1}^{m}\nu_{\ell}E_{\ell}$. Then the relation states that
\begin{align}
    \log Z_{\beta}(\bm\nu)=-\beta\min_{\rho} F(\rho).\label{eq_relation_pre}
\end{align}
As shown in Eq.~\eqref{eq_relation_pre}, it is natural to minimize the free energy to obtain the value of $\log Z_{\beta}(\bm\nu)$. However, it is infeasible to directly minimize the free energy on NISQ devices since performing entropy estimation with even shallow circuits is difficult \cite{Gheorghiu}. To deal with this issue, we choose an alternate version of Eq.~\eqref{eq_relation_pre}:
\begin{align}
    \log Z_{\beta}(\bm\nu)=-\beta\min_{\mathbf{p}} \sum_{j=1}^{N} p_{j}\cdot \lambda_{j}+\beta^{-1} \sum_{j=1}^{N}p_{j}\log p_{j},\label{eq_objective_pre}
\end{align}
where $\bm\lambda=(\lambda_1,...,\lambda_N)$ is the vector of eigenvalues of $H(\bm\nu)$, and $\mathbf{p}=(p_1,...,p_N)$ represents an $N$-dimensional probability distribution, with $N=2^n$ the Hamiltonian's dimension. Please note that proofs for Eqs.~\eqref{eq_relation_pre}-\eqref{eq_objective_pre} are provided in the supplementary file. Thus, optimizing the R.H.S of Eq.~\eqref{eq_objective_pre} could obtain the desired quantity and avoid the von Neumann entropy estimation simultaneously, assuming eigenvalues of the Hamiltonian $H(\bm\nu)$ is given previously. As a result, our task is reduced to solve the following optimization program based on the equality in Eq.~\eqref{eq_objective_pre}:
\begin{align}
    \min_{\mathbf{p}}&\quad C(\mathbf{p}) \label{alternate_optimizaiton}\\
    \mbox{s.t.} &\quad \sum_{j=1}^{N}p_{j}=1\nonumber\\
    & \quad p_{j}\geq0,\forall j=1,\ldots,N \nonumber
\end{align}
where 
\begin{align}
    C(\mathbf{p})=\sum_{j=1}^{N} p_{j}\cdot \lambda_{j}+\beta^{-1} \sum_{j=1}^{N}p_{j}\log p_{j}. \label{op_partion}
\end{align}

The optimization program in Eq.~\eqref{alternate_optimizaiton} is a typical convex optimization program. In the context of convex optimization, there are many classical algorithms to solve the optimization program, such as the interior-point method~\cite{karmarkar1984new}, ellipsoid method~\cite{grotschel1993geometric}, cutting-plane method~\cite{kelley1960cutting}, and random walks~\cite{Kalai2006}, etc. For example, we consider using the cutting plane method \cite{Lee2015,Jiang2020}, which requires the membership and evaluation procedures \cite{Lee2017}. Concerning the program in Eq.~\eqref{alternate_optimizaiton}, the membership procedure determines whether a point belongs to the set of probability distributions, and the evaluation procedure takes in a probability distribution $\mathbf{p}$ and returns the value $C(\mathbf{p})$ with high accuracy. Clearly, it is easy to determine whether the given point is a probability distribution while challenging to efficiently evaluate the function value. Thus, we provide a procedure to solve the convex optimization problem as well as overcome this challenge at the same time in Algorithm~\ref{alg:lpfe}.

\begin{algorithm}[H]
\footnotesize
\caption{Log-partition function estimation}
\label{alg:lpfe}
\begin{algorithmic}[1]
    \REQUIRE Parameterized quantum circuit $U(\bm\theta)$, Hamiltonian $H(\bm\nu)$, constant $\beta$;
    \ENSURE An estimate for $\log{Z}_{\beta}(\bm\nu)$;
    \STATE {\color{blue}\texttt{\# Evaluation procedure construction}}
    \STATE Take probability distribution $\mathbf{p}$ as input;
    \STATE Set integer $T$ and $D$;
    \STATE Sample $TD$ integers $t_{1}^{1},...,t_{T}^{1},...,t_{1}^{D},...,t_{T}^{D}$ according to $\mathbf{p}$;
    \STATE Prepare computational states $\ket{\psi_{t_{1}^{1}}}$, $...$, $\ket{\psi_{t_{T}^{1}}}$, $...$, $\ket{\psi_{t_{1}^{D}}}$, $\ldots$, $\ket{\psi_{t_{T}^{D}}}$;
    \STATE Compute approximate eigenvalues: $\lambda_{t_j^{s}}=\bra{\psi_{t_j^s}}U^{\dagger}(\bm\theta)H(\bm\nu)U(\bm\theta)\ket{\psi_{t_j^s}}$ for all $j=1,\ldots,T$ and $s=1,\ldots,D$;
    \STATE Compute averages: $ave_{s}=\frac{1}{T}\sum_{j=1}^{T}\lambda_{t_{j}^s}$ for all $s=1,...,D$;
    \STATE Take the median value $C(\mathbf{p})\leftarrow{\rm median}(\lambda_{ave_1},...,\lambda_{ave_{D}})+\beta^{-1}\sum_{j=1}^{N}p_{j}\log{p_{j}}$;
    \STATE {\color{blue}\texttt{\# Membership procedure construction}}
    \STATE Construct a membership procedure;
    \STATE {\color{blue}\texttt{\# Convex optimization solution}}
    \STATE Compute the function's global minimum value $C(\mathbf{p}^*)$ and the optimal point $\mathbf{p}^*$ via the cutting plane method.
    \RETURN{value $-\beta C(\mathbf{p}^*)$ and the final point $\mathbf{p}^*$.}
\end{algorithmic}
\end{algorithm}

In Algorithm~\ref{alg:lpfe}, we compute the log-partition function using a classical convex optimization method. For this purpose, we first show the construction process of evaluation procedure. That is, given a point $\mathbf{p}$, find an estimate for $C(\mathbf{p})$. We assume we are given a parameterized quantum circuit $U(\bm\theta)$ that can learn eigenvectors of the Hamiltonian $H(\bm\nu)$. In our approach, the $U(\bm\theta)$ is combined with the importance sampling technique (cf. lines 3-8) to deal with the large-sized Hamiltonians. Specifically, i) we sample $TD$ indices according to the distribution $\mathbf{p}$ (cf. line 4); ii) we evaluate the eigenvalues associated with the sampled indices (cf. lines 5-6); iii) we take the average over $T$ (cf. line 7) and the median over $D$ (cf. line 8) to evaluate the function value $C(\mathbf{p})$ with high accuracy and success probability. Eventually, with the evaluation procedure and the membership procedure, the global minimum of $C(\mathbf{p})$ could be obtained via the cutting plane method \cite{Lee2015,Lee2017,Jiang2020}. Finally, based on the relationship between $\log Z_{\beta}(\bm\nu)$ and $C(\mathbf{p}^*)$ (cf. Eq. (\ref{eq_objective_pre})), we could derive the log-partition function value. Here $\mathbf{p}^*$ denotes the optimal distribution of the optimization in Eq. (\ref{eq_objective_pre})).

\begin{remark}
Notice that a crucial gadget in Algorithm~\ref{alg:lpfe} is the PQC $U(\bm\theta)$, which we have assumed to be accessible. To complement the assumption, we provide a procedure for extracting eigenvalues in the next section, \emph{Stochastic variational quantum eigensolver}. In particular, we will prsent a variational quantum algorithm for learning the eigenvectors of the parameterized Hamiltonians.
\end{remark}

Now we discuss the cost of applying Algorithm~\ref{alg:lpfe}. As the efficiency of Algorithm~\ref{alg:lpfe} mainly relies on the cost of the evaluation procedure, we only discuss it here. Suppose we have access to Hamiltonian $H(\bm\nu)$'s eigenvalues $\bm\lambda$, then the objective function $C(\mathbf{p})$ can be effectively evaluated. Recall that $C(\mathbf{p})$ contains two parts $\sum_{j=1}^{N} p_{j}\cdot \lambda_{j}$ and $\beta^{-1} \sum_{j=1}^{N}p_{j}\log p_{j}$. On the one hand, the latter value can be computed immediately since $\mathbf{p}$ is stored on classical devices. On the other hand, value $\sum_{j=1}^{N} p_{j}\cdot \lambda_{j}$ can be regarded as an expectation of the probability $\mathbf{p}$, where value $\lambda_j$ is sampled with probability $p_j$. Notably, the total cost for estimating $C(\mathbf{p})$ is dominated by the number of samples. Then we analyze the number of required samples for loss evaluation in Proposition~\ref{theorem_logpartitionfunction_pre}.

\begin{proposition}\label{theorem_logpartitionfunction_pre}
For any constant $\beta>0$ and parameterized Hamiltonian $H(\bm\nu)=\sum_{\ell=1}^{m}\nu_{\ell}E_{\ell}$ with $E_{\ell}\in\{X, Y, Z, I\}^{\otimes n}$ and $\bm\nu\in\mathbb{R}^{m}$, suppose we are given access to a parameterized quantum circuit $U(\bm\theta)$ that can prepare $H(\bm\nu)$'s eigenvectors, then the objective function $C(\mathbf{p})$ can be computed up to precision $\epsilon$ with probability larger than $2/3$ by taking $T=O({m\|\bm\nu\|_2^2}/{\epsilon^2})$ samples. Furthermore, the probability can be improved to $1-\eta$ costing an additional multiplicative factor of $D=O(\log(1/\eta))$.
\end{proposition}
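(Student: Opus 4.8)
The plan is to bound the statistical error in estimating the first term of $C(\mathbf{p})$, namely the expectation $\sum_{j=1}^{N}p_j\lambda_j$, since the entropy term $\beta^{-1}\sum_j p_j\log p_j$ is computed exactly on a classical device and contributes no sampling error. First I would set up the importance-sampling estimator: define the random variable $\Lambda$ that takes value $\lambda_{t}$ with probability $p_t$, so that $\mathbb{E}[\Lambda]=\sum_{j=1}^{N}p_j\lambda_j$ is exactly the quantity we wish to approximate. Drawing $T$ i.i.d.\ samples $t_1,\dots,t_T$ according to $\mathbf{p}$ and evaluating the associated eigenvalues $\lambda_{t_j}=\bra{\psi_{t_j}}U^{\dagger}(\bm\theta)H(\bm\nu)U(\bm\theta)\ket{\psi_{t_j}}$ via the diagonalizing circuit, the sample mean $ave=\frac{1}{T}\sum_{j=1}^{T}\lambda_{t_j}$ is an unbiased estimator of $\mathbb{E}[\Lambda]$.

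Next I would control the variance. The key input is that each $|\lambda_j|\leq\|H(\bm\nu)\|_{\infty}\leq\sqrt{m}\,\|\bm\nu\|_2$ by Lemma~\ref{le_spectral}, so $\var(\Lambda)\leq\mathbb{E}[\Lambda^2]\leq\|H(\bm\nu)\|_{\infty}^2\leq m\|\bm\nu\|_2^2$. Since the samples are independent, $\var(ave)=\var(\Lambda)/T\leq m\|\bm\nu\|_2^2/T$. Applying Chebyshev's inequality,
\begin{align}
\Pr\!\left(\,|ave-\mathbb{E}[\Lambda]|\geq\epsilon\,\right)\leq\frac{\var(ave)}{\epsilon^2}\leq\frac{m\|\bm\nu\|_2^2}{T\epsilon^2}.
\end{align}
Choosing $T=O(m\|\bm\nu\|_2^2/\epsilon^2)$ with a suitable constant makes this failure probability at most $1/3$, giving the claimed precision $\epsilon$ with probability larger than $2/3$. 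Because the entropy term is exact, the same $\epsilon$-bound transfers directly to $C(\mathbf{p})$.

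For the probability-amplification claim I would apply the standard median-of-means boosting. Repeat the above $D$ times independently to obtain means $ave_1,\dots,ave_D$, each within $\epsilon$ of $\mathbb{E}[\Lambda]$ with probability at least $2/3$, and return their median. The median deviates from $\mathbb{E}[\Lambda]$ by more than $\epsilon$ only if at least half of the $D$ trials fail; a Chernoff bound on this binomial tail shows the failure probability decays exponentially in $D$, so $D=O(\log(1/\eta))$ suffices to drive it below $\eta$, yielding success probability $1-\eta$ at the stated multiplicative overhead. The main obstacle I anticipate is not the probabilistic machinery, which is routine, but justifying the variance bound cleanly: one must argue that the circuit-evaluated quantities $\lambda_{t_j}$ genuinely coincide with (or are adequately close to) the true eigenvalues $\lambda_{t_j}$ so that the estimator targets $\sum_j p_j\lambda_j$ rather than a perturbed expectation, and that the spectral-norm bound of Lemma~\ref{le_spectral} indeed caps the second moment; handling any residual diagonalization error from $U(\bm\theta)$ within this analysis is the most delicate point.
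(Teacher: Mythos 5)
Your proposal is correct and follows essentially the same route as the paper's proof: the identical random variable taking value $\lambda_j$ with probability $p_j$, the variance bounded by the squared spectral norm via Lemma~\ref{le_spectral}, Chebyshev's inequality to fix $T=O(m\|\bm\nu\|_2^2/\epsilon^2)$, and the Chernoff/median trick for the $O(\log(1/\eta))$ amplification. Your closing worry about residual diagonalization error is legitimate but lies outside the proposition's scope --- both your argument and the paper's assume $U(\bm\theta)$ prepares exact eigenvectors, and the paper handles imperfect eigenvalues separately in Lemma~\ref{le_re_eigen}.
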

\begin{proof-sketch}
In general, the expectation can be approximated by the sample mean according to Chebyshev's inequality. Specifically speaking, the expectation can be estimated up to precision $\epsilon$ with high probability (e.g., larger than $2/3$) by taking $O({\bf Var}/\epsilon^2)$ samples, where ${\bf Var}$ denotes the variance of the distribution. Here, the number of samples is $T=O(m\|\bm\nu\|_2^2/\epsilon^2)$, since the variance is bounded by the squared spectral norm of $H(\bm\nu)$, which is less than $\sqrt{m}\|\bm\nu\|_2$. Furthermore, Chernoff bounds allow improving success probability to $1-\eta$ at an additional cost of a multiplicative factor of $D=O(\log(1/\eta))$.
\end{proof-sketch}

As shown in Proposition~\ref{theorem_logpartitionfunction_pre}, our evaluation method is computationally efficient, since the number of samples scales polynomially with the number of qubits. Hence Algorithm~\ref{alg:lpfe} could be applied to compute the partition function of the parameterized Hamiltonian, given the suitable PQC $U(\bm\theta)$.

\subsection{Stochastic variational quantum eigensolver}~\label{sec_hd}
This section discusses learning the eigenvectors of the parameterized Hamiltonian $H(\bm\nu)$ using variational quantum algorithms and the importance sampling technique. First, we outline the algorithm in Algorithm~\ref{alg:svqe} and then discuss the fundamental theory. Second, we circumvent the cost for coping with large-scaled Hamiltonians by the importance sampling technique. We also analyze the cost of loss evaluation in the algorithm.
\begin{figure}[thb]
\begin{algorithm}[H]
\footnotesize
\caption{Stochastic variational quantum eigensolver (SVQE)}
\label{alg:svqe}
\begin{algorithmic}[1]
    \REQUIRE Parameterized quantum circuit $U(\bm\theta)$, Hamiltonian $H(\bm\nu)$, and weights $\mathbf{q}$;
    \ENSURE Optimal PQC $U(\bm\theta)$;
    \STATE Set number of iterations $I$ and $l=1$;
\STATE Set integers $T$ and $D$;
\STATE Set learning rate $r_{\theta}$;
\STATE Set probability distribution $\mathbf{q}$;
\STATE Sample $TD$ integers $k_{1}^{1}$,$\ldots$,$k_{T}^{1}$,$\dots$,$k_{1}^{D}$,$\ldots$,$k_{T}^{D}$ according to $\mathbf{q}$;
\STATE Prepare computational states $\ket{\psi_{k_{1}^{1}}}$, $\ldots$ , $\ket{\psi_{k_{T}^{1}}}$, $\ldots$, $\ket{\psi_{k_{1}^{D}}}$, $\ldots$, $\ket{\psi_{k_{T}^{D}}}$;
\WHILE{$l\leq I$}
\STATE Compute value $\bra{\psi_{k_{j}^{s}}}U^{\dagger}(\bm\theta)H(\bm\nu)U(\bm\theta)\ket{\psi_{k_{j}^{s}}}$ for all $j=1$,$\ldots$,$T$ and $s=1$,$\ldots$,$D$;
\STATE Compute averages: $ave_{s}=\frac{1}{T}\sum_{j=1}^{T}\bra{\psi_{k_{j}^{s}}}U^{\dagger}(\bm\theta)H(\bm\nu)U(\bm\theta)\ket{\psi_{k_{j}^{s}}}$ for all $s=1,...,D$;
\STATE Let $M(\bm\theta)\leftarrow {\rm median}(ave_{1},...,ave_{D})$;
\STATE Use $M(\bm\theta)$ to compute the gradient $\nabla$ by parameter shift rules~\cite{mitarai2018quantum};
\STATE Update parameters $\bm\theta\leftarrow \bm\theta-r_{\theta}\nabla$;
\STATE Set $l\leftarrow l+1$;
\ENDWHILE
\RETURN{the final $U(\bm\theta)$.}
\end{algorithmic}
\end{algorithm}
\end{figure}

To incorporate variational quantum algorithms, we utilize the variational principle of Hamiltonian's eigenvalues. That is, Hamiltonian's eigenvalues majorize the diagonal elements, and the dot function with an increasingly ordered vector is Schur concave~\cite{roberts1973convex}. A similar idea has already been discussed in \cite{Nakanishi2018}. In contrast, our method learns the full spectrum of the Hamiltonian. We define a function $M(\bm\theta)$ over all parameters $\bm\theta$ of the circuit.
\begin{align}
    M(\bm\theta)=\sum_{j=1}^{N}q_{j}\cdot\bra{\psi_j}U^{\dagger}(\bm\theta)H(\bm\nu)U(\bm\theta)\ket{\psi_j},
\end{align}
where $\mathbf{q}=(q_1,...,q_N)$ is a probability distribution such that $q_1<q_2<...<q_N$, and notations $\ket{\psi_1},\ldots,\ket{\psi_N}$ denote the computational basis. Suppose that PQC $U(\bm\theta)$ has enough expressiveness, then $U(\bm\theta)\ket{\psi_j}$ could learn the $j$-th eigenvector of the Hamiltonian $H(\bm\nu)$ with suitable parameters. Particularly, $M(\bm\theta)$ will reach the global minimum when all eigenvectors are learned. In other words, we use the PQC $U(\bm\theta)$ to learn eigenvectors via finding the global minimum of $M(\bm\theta)$ over all parameters $\bm\theta$.

\begin{remark}
Choosing a suitable $U(\bm\theta)$ is critical to many variational quantum algorithms as well as our Algorithm~\ref{alg:svqe}. With enough expressibility, training the PQC $U(\bm\theta)$ would allow us to exactly or approximately learn the solution to the certain problem. The expressibility of PQCs has been recently studied in~\cite{Sim2019}. Throughout this paper, we assume the used PQC $U(\bm\theta)$ is able to learn well the eigenvectors of Hamiltonians $H(\bm\nu)$ for arbitrary $\bm\nu$.
\end{remark}

\begin{remark}
In the learning process, we employ a gradient-based method to update the parameters $\bm\theta$ iteratively. In each iteration, the corresponding gradients are computed via the parameter shift rule~\cite{mitarai2018quantum}, which outsources the gradient estimation to the loss evaluation. As this is similar to other variational quantum algorithms, we omit the details of gradient computation. For details of gradient derivation, please refer to the proof of Proposition 3 in \cite{Wang2020a}.
\end{remark}

Notice that for large Hamiltonians, the loss $M(\bm\theta)$ may consist of exponentially many terms, which would be a huge burden to the loss evaluation. However, we could employ the importance sampling technique to circumvent this issue. To this end, $M(\bm\theta)$ is taken as an expectation of the distribution $\mathbf{q}$. Hence, $M(\bm\theta)$ is to be estimated by the sample mean. Notably, the cost of loss evaluation is dominated by the number of samples, which is why we call our method stochastic variational quantum eigensolver (SVQE). Our algorithm with importance sampling for minimizing $M(\bm\theta)$ is depicted in Algorithm~\ref{alg:svqe}. In the following, we analyze the sample complexity in the loss evaluation.

\begin{proposition}\label{theorem_SVQE_pre}
Consider a Hamiltonian $H(\bm\nu)=\sum_{\ell=1}^{m}\nu_{\ell}E_{\ell}$ with Pauli operators $E_{\ell}\in\{X,Y,Z,I\}^{\otimes n}$ and constants $\nu_{\ell}\in[-1,1]$. Given any constants $\epsilon>0$, $\eta\in(0,1)$, $\beta>0$, the objective function $M(\bm\theta)$ in SVQE can be estimated up to precision $\epsilon$ with probability at least $1-\eta$, costing $TD$ samples with $T=O(m\|\bm\nu\|_2^2/\epsilon^2)$ and $D=O(\log(1/\eta))$. Besides, the total number of measurements is given below:
\begin{align}
O\left(\frac{mTD\|\bm\nu\|_1^2(n+\log(m/\eta))}{\epsilon^2}\right).
\end{align}
\end{proposition}
\begin{proof-sketch}
The number of samples is determined by the accuracy $\epsilon$ and Hamiltonian $H(\bm\nu)$. By Chebyshev's inequality, estimating $M(\bm\theta)$ up to precision $\epsilon$ with high probability requires $T=O(m\|\bm\nu\|_2^2/\epsilon^2)$ samples, since the variance is bounded by the spectral norm, which is less than $\sqrt{m}\|\bm\nu\|_2$. Meanwhile, the expectation value $\bra{\psi_j}U^{\dagger}(\bm\theta)H(\bm\nu)U(\bm\theta)\ket{\psi_j}$ is evaluated by measurements. We compute the expectation value of the observable $H(\bm\nu)$ by measuring each Pauli operator $E_{\ell}$ separately, since there are only $m=O(poly(n))$ Pauli operators.
\end{proof-sketch}

\begin{remark}
Other methods for computing expectation value of Hamiltonians can be found in Ref.~\cite{Sweke2019,Arrasmith2020}, where importance sampling is employed to sample Pauli operator $E_{l}$ of the Hamiltonian. 
\end{remark}
\begin{remark}
In the context of quantum algorithms, there are many proposed methods for learning the low-lying eigenvectors of the Hamiltonian and diagonalizing Hamiltonian. Some known quantum algorithms for Hamiltonian diagonalization are based on quantum fast Fourier transform \cite{PhysRevLett.83.5162}, which may be too costly for NISQ computers and thus not suitable for our purpose. Recently, there have already been some works on finding ground and excited eigenstates of the Hamiltonian with NISQ devices, i.e., variational quantum eigensolvers~\cite{peruzzo2014variational,higgott2019variational,mcclean2016theory,Nakanishi2018,jones2019variational,kandala2017hardware,Wang2020a,Commeau2020}. They maybe employed to learn eigenvectors in the Hamiltonian learning framework.
\end{remark}

\subsection{Gradient estimation}\label{outteroptimization}
Recall that we employ a gradient-based method to do the optimization in the Hamiltonian learning (cf. Figure~\ref{fig:qahl}). We use the tools developed in Sec.~\ref{sec_lpfe}-\ref{sec_hd} to derive the gradient estimation procedure.

Usually, with the estimated gradient, parameters are updated in the following way:
\begin{align}
    \bm\nu\leftarrow \bm\nu-r\nabla L(\bm\nu),\label{eq:coefficient_update}
\end{align}
where $r$ is the learning rate. The expression of the gradient is given below.
\begin{align}
    \nabla L(\bm\nu)=\left(\frac{\partial L(\bm\nu)}{\partial \nu_{1}},...,\frac{\partial L(\bm\nu)}{\partial \nu_{m}}\right).
\end{align}
Furthermore, the explicit formula of each partial derivative is given in~\cite{Anshu2020}:
\begin{align}
    \frac{\partial L(\bm\nu)}{\partial\nu_{\ell}}&=\frac{\partial}{\partial\nu_{\ell}}\log{Z}_{\beta}(\bm\nu)+\beta e_{\ell}=-\beta\tr(\rho_{\beta}(\bm\nu)E_{\ell})+\beta e_{\ell},\label{outterlossfunctionoptimization}
\end{align}
where $\rho_{\beta}(\bm\nu)= e^{-\beta H(\bm\nu)}/Z_{\beta}(\bm\nu)$ represents the Gibbs state associated with the parameterized Hamiltonian $H(\bm\nu)$. 

\begin{figure}[htb]
\begin{algorithm}[H]
\footnotesize
\caption{Gradient estimation}
\label{alg:grad}
\begin{algorithmic}[1]
    \REQUIRE Post-training circuit $U(\bm\theta)$, Pauli operators $\{E_{\ell}\}_{\ell=1}^{m}$, optimal $\widehat{\mathbf{p}}^*$, and constants $\beta$ and $\{e_{\ell}\}_{\ell=1}^{m}$;
    \ENSURE Gradient estimate $\nabla L(\bm\nu)$;
    \STATE Set $\ell=1$;
\STATE Set integer $K$ and $D$;
\STATE Sample $K$ integers $l_{1}^{1},...,l_{K}^{1},...,l_{1}^{D},...,l_{K}^{D},$ according to $\widehat{\mathbf{p}}^*$;
\STATE Prepare computational states $\ket{\psi_{l_{1}^{1}}}$,$\ldots$,$\ket{\psi_{l_{K}^{1}}}$,$\ldots$,$\ket{\psi_{l_{1}^{D}}}$,$\ldots$, $\ket{\psi_{l_{K}^{D}}}$;
\WHILE{$\ell\leq m$}
\STATE Compute value $\bra{\psi_{l_{j}^s}}U^{\dagger}(\bm\theta)E_{\ell}U(\bm\theta)\ket{\psi_{l_{j}^s}}$ for $j=1,..,K$ and $s=1,...,D$;
\STATE Calculate averages: $ave_{s}=\frac{1}{K}\sum_{j=1}^{K}\bra{\psi_{l_j^{s}}}U^{\dagger}(\bm\theta)E_{\ell}U(\bm\theta)\ket{\psi_{l_{j}^{s}}}$
for all $s=1,...,D$;
\STATE Take the median value: $s_{\ell}=-\beta\cdot{\rm median}(ave_{1},\ldots,ave_{D})+\beta e_{\ell}$;
\STATE Set $\ell\leftarrow \ell+1$;
\ENDWHILE
\RETURN{vector $(s_{1},...,s_{m})$.}
\end{algorithmic}
\end{algorithm}
\end{figure}

According to the second equality in Eq.~\eqref{outterlossfunctionoptimization}, preparing Gibbs state $\rho_{\beta}(\bm\nu)$ is likely to be necessary to the gradient estimation, which is quite challenging~\cite{islam2015measuring,yuan2019theory,wu2019variational,Xu2019,Wang2020}. However, we provide a procedure for gradient estimation without preparing the Gibbs state $\rho_{\beta}(\bm\nu)$ in Algorithm~\ref{alg:grad}. We use the post-training PQC $U(\bm\theta)$ and the optimal distribution $\widehat{\mathbf{p}}^*$ (cf. Algorithm~\ref{alg:lpfe}) from Sec.~\ref{sec_lpfe}-\ref{sec_hd}, respectively. And the component of the gradient can be computed in the sense that
\begin{align}
    \frac{\partial L(\bm\nu)}{\partial\nu_{\ell}}\approx-\beta\sum_{j=1}^{N}\widehat{p}_{j}^*\cdot \bra{\psi_j}U^{\dagger}(\bm\theta)E_{\ell}U(\bm\theta)\ket{\psi_j}+\beta e_{\ell}.\label{approx_derivative}
\end{align}
The validity of the relation in Eq.~\eqref{approx_derivative} is proved in Proposition~\ref{theoremforgibbsstatepreparation}. 

\begin{proposition}[Correctness]
\label{theoremforgibbsstatepreparation}
Consider a parameterized Hamiltonian $H(\bm\nu)$ and its Gibbs state $\rho_{\beta}(\bm\nu)$. Suppose the $U(\bm\theta)$ from SVQE (cf. Algorithm~\ref{alg:svqe}) and $\widehat{\mathbf{p}}^*$ from log-partition function estimation procedure (cf. Algorithm~\ref{alg:lpfe}) are optimal. Define a density operator $\rho_{\beta}^*$ as follows:
\begin{align}
    \rho_{\beta}^*=\sum_{j=1}^{N}\widehat{p}_{j}^*\cdot U(\bm\theta)\op{\psi_j}{\psi_j}U^{\dagger}(\bm\theta),
    \label{equationapproximategibbsstate}
\end{align}
where $\{\ket{\psi}_j\}$ denote the computational basis. Denote the estimated eigenvalues by $\widehat{\bm\lambda}$, where $\widehat{\lambda}_{j}=\bra{\psi_{j}}U^{\dagger}(\bm\theta)H(\bm\nu)U(\bm\theta)\ket{\psi_j}$ for all $j=1,\ldots,N$. Then, $\rho_{\beta}^*$ is an approximation of $\rho_{\beta}(\bm\nu)$ in the sense that
\begin{align}
    D(\rho_{\beta}^*,\rho_{\beta}(\bm\nu))\leq\sqrt{2\beta\max\left\{\mathbf{E}_{\widehat{\mathbf{p}}^*}[|\widehat{\lambda}-\lambda|],\mathbf{E}_{\mathbf{p}^*}[|\widehat{\lambda}-\lambda|]\right\}}.
\end{align}
where $D(\cdot,\cdot)$ denotes the trace distance, $\bm\lambda$ represent $H(\bm\nu)$'s true eigenvalues, $\mathbf{p}^*$ is the distribution corresponding to $\bm\lambda$, i.e., $\lambda_j=e^{-\beta\lambda_j}/\sum_{l}e^{-\beta\lambda_l}$, and
\begin{align}
    \mathbf{E}_{\widehat{\mathbf{p}}^*}[|\widehat{\lambda}-\lambda|]=\sum_{j=1}^{N}\widehat{p}_j^*|\widehat{\lambda}_j-\lambda_j|,\quad\mathbf{E}_{\mathbf{p}^*}[|\widehat{\lambda}-\lambda|]=\sum_{j=1}^{N}p_j^*|\widehat{\lambda}_j-\lambda_j|.
\end{align}
\end{proposition}

Note that the quantity in Eq.~\eqref{approx_derivative} contains an expectation of distribution $\widehat{\mathbf{p}}^*$, then the partial derivative $\frac{\partial L(\bm\nu)}{\partial\nu_{\ell}}$ is estimated by the sample mean. Specifically, we first randomly select the computational basis vectors $\ket{\psi_j}$ complying with distribution $\widehat{\mathbf{p}}^*$ and then compute the associated eigenvalues via $U(\bm\theta)$. The detailed procedure of sampling and estimate computation is laid out in Algorithm~\ref{alg:grad}. The number of required samples is analyzed in Proposition~\ref{the_gradient}.   
\begin{proposition}[Sample complexity]
\label{the_gradient}
Given $\epsilon>0$ and $\eta\in(0,1)$, Algorithm~\ref{alg:grad} can compute an estimate for the gradient $\nabla L(\bm\nu)$ up to precision $\epsilon$ with probability larger than $1-\eta$. Particularly, the overall number of samples is $KD=O(\beta^2\log(2m/\eta)/\epsilon^2)$ with $K=O(\beta^2/\epsilon^2)$ and $D=O(\log(2m/\eta))$. Besides, the total number of measurements is $O( KD\cdot m\beta^2(n+\log(m/\eta))/\epsilon^2)$.
\end{proposition}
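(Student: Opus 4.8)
The plan is to treat the $m$ coordinates of the gradient separately, bound the statistical error of the median-of-means estimator built in Algorithm~\ref{alg:grad}, and then recombine the coordinates with a union bound. Fix a coordinate $\ell$. The quantity the algorithm targets is $-\beta g_{\ell}+\beta e_{\ell}$ with $g_{\ell}=\sum_{j=1}^{N}\widehat{p}_{j}^{*}\bra{\psi_j}U^{\dagger}(\bm\theta)E_{\ell}U(\bm\theta)\ket{\psi_j}$, and the output is $s_{\ell}=-\beta\cdot\mathrm{median}(ave_1,\dots,ave_D)+\beta e_{\ell}$. Since $\beta e_\ell$ is known exactly, enforcing $|s_\ell-(-\beta g_\ell+\beta e_\ell)|\le\epsilon$ reduces to estimating $g_\ell$ to precision $\epsilon/\beta$. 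First I would observe that $g_\ell$ is precisely the expectation of the random variable $X_\ell=\bra{\psi_J}U^{\dagger}(\bm\theta)E_{\ell}U(\bm\theta)\ket{\psi_J}$, where the index $J$ is drawn from $\widehat{\mathbf{p}}^{*}$; because $E_\ell$ is a Pauli string with eigenvalues $\pm1$ we have $|X_\ell|\le1$ and hence $\mathrm{Var}(X_\ell)\le1$, uniformly in $\ell$. This is the analogue of the spectral-norm variance bound (Lemma~\ref{le_spectral}) used for $M(\bm\theta)$, but here it is trivial since the observable is a single Pauli.

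The sampling analysis then mirrors the one already used for $C(\mathbf{p})$ and $M(\bm\theta)$. Each inner average $ave_s$ is the empirical mean of $K$ i.i.d.\ copies of $X_\ell$, so $\mathrm{Var}(ave_s)\le1/K$; Chebyshev's inequality gives $\Pr[|ave_s-g_\ell|>\epsilon/\beta]\le\beta^{2}/(K\epsilon^{2})$, which is at most $1/4$ once $K=O(\beta^{2}/\epsilon^{2})$. Boosting the confidence is the standard median-of-means step: the median of $D$ independent such averages deviates by more than $\epsilon/\beta$ only if at least half of them do, and a Chernoff (Hoeffding) bound on this count shows the median fails with probability at most $e^{-\Omega(D)}$. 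Choosing $D=O(\log(2m/\eta))$ makes each coordinate fail with probability at most $\eta/m$, so a union bound over the $m$ coordinates yields overall success probability $\ge1-\eta$, with total sample count $KD=O(\beta^{2}\log(2m/\eta)/\epsilon^{2})$.

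For the measurement count I would account for the second, nested layer of randomness: each value $\bra{\psi_{l_j^s}}U^{\dagger}(\bm\theta)E_{\ell}U(\bm\theta)\ket{\psi_{l_j^s}}$ is itself not known exactly but must be estimated by repeatedly measuring the $\pm1$-valued observable $E_\ell$ on the state $U(\bm\theta)\ket{\psi_{l_j^s}}$. Splitting the error budget so that both the outer importance-sampling error and the per-expectation measurement error are $O(\epsilon/\beta)$ preserves the value of $K$ above. A single Pauli measurement returns $\pm1$ with mean equal to the desired expectation, so by Hoeffding each expectation value can be pinned to precision $O(\epsilon/\beta)$ using $O(\beta^{2}(n+\log(m/\eta))/\epsilon^{2})$ shots; the $n$ here arises from taking a union bound over the (at most) $N=2^{n}$ distinct computational basis states that can be sampled, combined with the $m$ Pauli operators, so that $\log(mN/\eta)=O(n+\log(m/\eta))$. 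Multiplying this per-value cost by the number of expectation values actually evaluated, $m\cdot KD$, gives the claimed total $O(KD\cdot m\beta^{2}(n+\log(m/\eta))/\epsilon^{2})$.

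I expect the main obstacle to be the clean bookkeeping of the two interleaved sources of randomness — the importance sampling over $J\sim\widehat{\mathbf{p}}^{*}$ and the finite-shot estimation of each individual expectation value — and in particular justifying the additive split of the error budget and tracking precisely where the $n$ (the $\log N$ from the exponentially many basis states) enters the union bound, since this is the only place where the system dimension surfaces in an otherwise dimension-free bound.
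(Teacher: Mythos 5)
Your proposal is correct and follows essentially the same route as the paper's proof: Chebyshev on the importance-sampled mean with the trivial variance bound $|Z_\ell|\leq 1$ to get $K=O(\beta^2/\epsilon^2)$, a Chernoff/median boost with $D=O(\log(2m/\eta))$, a union bound over the $m$ coordinates, and a nested per-expectation measurement cost with failure probability suppressed to $\eta/(2^{n+1}m)$, which is exactly where the $n+\log(m/\eta)$ factor enters. The only cosmetic difference is that you take the union bound over all $2^n$ basis states rather than over the $KD$ drawn samples (the paper's variant needs the side remark $KD<2^n$), which if anything is slightly cleaner.
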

The proofs for Propositions \ref{theoremforgibbsstatepreparation}-\ref{the_gradient} are deferred to the supplementary file.

To validate the gradient estimation, we show that the average of the overall errors determines the accuracy of the gradient estimation. For this purpose, Proposition~\ref{theoremforgibbsstatepreparation} shows that matrix $\rho_{\beta}^*$ is an approximation of the desired density matrix $\rho_{\beta}(\bm\nu)$. Specifically, the trance distance between $\rho_{\beta}^*$ and $\rho_{\beta}(\bm\nu)$ is dependent on the averaged errors $\mathbf{E}_{\widehat{\mathbf{p}}^*}[|\widehat{\lambda}-\lambda|]$ and $\mathbf{E}_{\mathbf{p}^*}[|\widehat{\lambda}-\lambda|]$. Here, notation $|\widehat{\lambda}-\lambda|$ denotes the difference between estimated eigenvalue and the associated real eigenvalue. $\widehat{\mathbf{p}}^*$ and $\mathbf{p}^*$ are probability distributions, corresponding to $\widehat{\bm\lambda}$ and $\bm\lambda$, respectively. In particular, it implies that learning several low-lying eigenvectors with high accuracy may lead to a high-precision estimate of the gradient. We numerically verify this feature in Sec.~\ref{sec:ising}.

Moreover, Proposition~\ref{the_gradient} shows the feasibility of our approach as the number of measurements scales polynomially in parameters $n$, $1/\epsilon$, and $\beta$.

\begin{figure}[!t]
\begin{algorithm}[H]
\footnotesize
\caption{Hybrid quantum-classical Hamiltonian learning algorithm (HQHL)}
\label{alg:hyhl}
\begin{algorithmic}[1]
    \REQUIRE Pauli operators $\{E_{\ell}\}_{\ell=1}^{m}$, constants $\{e_{\ell}\}_{\ell=1}^{m}$, and $\beta$;
    \ENSURE  An estimate for target coefficients $\bm\nu$;
\STATE Initialize coefficients $\{\nu_{\ell}\}_{\ell=1}^{m}$;
\STATE Set number of iterations $I$ and $l=1$;
\STATE Set parameterized quantum circuit $U(\bm\theta)$;
\STATE Set learning rate $r$;
\WHILE{$l\leq I$}
\STATE Set Hamiltonian $H(\bm\nu)=\sum_{\ell=1}^{m}\nu_{\ell}E_{\ell}$;
\STATE Train $U(\bm\theta)$ by SVQE with $H(\bm\nu)$;
\STATE Derive a probability $\widehat{\mathbf{p}}^*$ by performing log-partition function estimation with $U(\bm\theta)$ and $\beta$;
\STATE Compute gradient $\nabla L(\bm\nu)$ by gradient estimation with $U(\bm\theta)$, $\widehat{\mathbf{p}}^*$, and $\beta$;
\STATE Update coefficients $\bm\nu\leftarrow \bm\nu-r\nabla L(\bm\nu)$;
\STATE Set $l\leftarrow l+1$;
\ENDWHILE
\RETURN{the final coefficients $\bm\nu$.}
\end{algorithmic}
\end{algorithm}
\end{figure}
\subsection{Hamiltonian learning algorithm} \label{sec_hla}
Eventually, we present our hybrid quantum-classical algorithm for Hamiltonian learning (HQHL) in Algorithm~\ref{alg:hyhl}. The main idea of HQHL is to find the target interaction coefficients by a gradient-descent method (cf. Figure~\ref{fig:qahl}). Thus, HQHL's main process is to compute the gradient of the objective function. Specifically, we take Pauli operators $\{E_{\ell}\}_{\ell=1}^{m}$, $\{e_{\ell}\}_{\ell=1}^{m}$, and $\beta$ as input. Then we initialize the coefficients by choosing $\bm\nu$ from $[-1,1]^{m}$ uniformly at random. Next, we compute the gradient of the objective function $L(\bm\nu)$ by Algorithm~\ref{alg:grad}. Then update the coefficients by choosing a suitable learning rate $r$ and using the estimated gradient. In consequence, after repeating the training process sufficiently many times, the final coefficients are supposed to approximate the target coefficients $\bm\nu$.

Notably, the learning process is in the ``\textbf{while}" loop of HQHL. In the loop, the subroutine SVQE (cf. Sec.~\ref{sec_hd}) is first called to learn Hamiltonian's eigenvectors and eigenvalues. Here, we choose a suitable parameterized quantum circuit $U(\bm\theta)$ and train it to prepare the eigenvectors of the Hamiltonian $H(\bm\nu)$. Afterwards, we enter the process of the log-partition function estimation (cf. Sec.~\ref{sec_lpfe}). It first exploits the $U(\bm\theta)$ to output the estimated eigenvalues of the parameterized Hamiltonian $H(\bm\nu)$ and then computes the objective function $L(\bm\nu)$. We would obtain a probability distribution $\widehat{\mathbf{p}}^*$ that consists of eigenvalues of the associated Gibbs state $\rho_{\beta}(\bm\nu)= e^{-\beta H(\bm\nu)}/Z_{\beta}(\bm\nu)$. Lastly, we exploit the resultant results (post-training circuit $U(\bm\theta)$ and distribution $\widehat{\mathbf{p}}^*$) to compute the gradients following the procedure in Algorithm~\ref{alg:grad} and update the coefficient $\bm\nu$ accordingly (cf. Eq.~\eqref{eq:coefficient_update}). 

\section{Numerical Results} \label{sec_numerical_sim}
In this section, we conduct numerical experiments to verify the correctness of our algorithm. Specifically, we consider recovering interactions coefficients of several Hamiltonians, including randomly generated Hamiltonians and many-body Hamiltonians. To ensure the performance of the algorithm, we choose a PQC (shown in Fig.~\ref{fig:ansatz_SVQE}) and set the circuit with enough expressibility. When testing our algorithm, we first use SVQE to learn the full spectrum of Hamiltonians, where size of the Hamiltonian varies from 3 to 5. In SVQE, weights $\mathbf{q}$ consists of a normalized sequence of arithmetic sequence. For instance, when $n=3$, $\mathbf{q}=(1,2,3,\ldots,8)/S_3$, where $S_3=\sum_{l=1}^{8}l$. Furthermore, {in order to reduce quantum resources,} we also partially learn the few smallest eigenvalues of the selected Ising models and derive estimates for coefficients up to precision 0.05. With fewer eigenvalues to be learned, the depth of the used PQC is significantly reduced. 
\begin{figure}[hbt]
    \[\Qcircuit @C=0.5em @R=0.5em {
   & \gate{R_z(\theta_{0,0,0})} & \gate{R_y(\theta_{0,0,1})} & \gate{R_z(\theta_{0,0,2})} &\ctrl{+1}&\qw&\qw& \targ & \gate{R_z(\theta_{1,0,0})} & \gate{R_y(\theta_{1,0,1})} & \gate{R_z(\theta_{1,0,2})}&\qw &\cdots \\ 
     & \gate{R_z(\theta_{0,1,0})} & \gate{R_y(\theta_{0,1,1})} & \gate{R_z(\theta_{0,1,2})} &\targ & \ctrl{+1}&\qw& \qw & \gate{R_z(\theta_{1,1,0})} & \gate{R_y(\theta_{1,1,1})} & \gate{R_z(\theta_{1,1,2})}&\qw &\cdots \\ 
& \gate{R_z(\theta_{0,2,0})} & \gate{R_y(\theta_{0,2,1})} & \gate{R_z(\theta_{0,2,2})} &\qw&\targ&\ctrl{+1}& \qw & \gate{R_z(\theta_{1,2,0})} & \gate{R_y(\theta_{1,2,1})} & \gate{R_z(\theta_{1,2,2})}&\qw &\cdots \\ 
& \gate{R_z(\theta_{0,3,0})} & \gate{R_y(\theta_{0,3,1})} & \gate{R_z(\theta_{0,3,2})} &\qw&\qw& \targ &\ctrl{-3}& \gate{R_z(\theta_{1,3,0})} & \gate{R_y(\theta_{1,3,1})} & \gate{R_z(\theta_{1,3,2})}&\qw &\cdots 
  \gategroup{1}{5}{4}{11}{1em}{--} \\
  &&&&&&& &&& &&& \times D 
}\]
    \caption{The selected quantum circuit $U(\bm\theta)$ for stochastic variational quantum eigensolver (SVQE). Here, $D$ represents circuit depth. Parameters $\bm\theta$ are randomly initialized from  a uniform distribution in [0, 2$\pi$] and updated via gradient descent method.}
    \label{fig:ansatz_SVQE}
\end{figure}

\subsection{Random Hamiltonian models}\label{sec:random}
This section shows the effectiveness of our algorithm with random Hamiltonians from three aspects: different $\beta$, different numbers of $\mu$ (\# $\mu$) and a different number of qubits (\# qubits).

\begin{table*}[t]
\footnotesize
  \tabcolsep 11pt
\begin{tabular}{l|c|c|c|c|c|c}
\toprule
Three aspects & \# qubits & \# $\mu$ & $\beta$ & LR & $\mu$ & $E_l$  \\ \hline
 & 3 & 3 & 1  & 1.0  & [ 0.3408 -0.6384 -0.4988]  & [[0 2 1]
 [2 1 3]
 [0 3 3]] \\ \hline
\multirow{2}{*}{Vary $\beta$} & \multirow{2}{*}{3} & \multirow{2}{*}{3} & 0.3  & 8.0 & [-0.4966 -0.8575 -0.7902] & [[1 0 0]
 [3 0 2]
 [3 1 3]]   \\\cline{4-7}
 &  &  & 3   & 0.1 & [0.5717 -0.1313 0.2053] & [[1 0 0]
 [3 3 3]
 [0 2 3]] \\ \hline
\multirow{4}{*}{Vary \# $\mu$} & \multirow{4}{*}{3} & 4 & \multirow{4}{*}{1}  & \multirow{4}{*}{1.0} & [-0.7205 -0.3676 -0.7583 -0.3002] &  [[3 2 1]
 [2 1 3]
 [0 0 2]
 [2 0 0]] \\ \cline{3-3} \cline{6-7}
 &  & 5 &  & & [-0.5254 -0.1481 -0.0037 -0.4373 0.7326]   & [[1 3 0]
 [2 1 1]
 [3 3 2]
 [2 3 1]
 [0 2 0]] \\\cline{3-3}\cline{6-7}
 &  & \multirow{2}{*}{6} &  &  & [-0.5992 0.7912 0.5307 & [[3 2 2]
 [0 2 1]
 [1 2 1]
 \\ 
  &  &  &  &  & -0.5422 -0.9239 0.0354] &
 [2 2 0]
 [0 1 2]
 [3 2 1]] \\ \hline
\multirow{2}{*}{Vary \# qubits} & 4 & \multirow{2}{*}{3} & \multirow{2}{*}{1} & \multirow{2}{*}{1.0}  & [ 0.0858  0.3748 -0.1007]  & [[0 2 0 1]
 [1 0 0 1]
 [2 0 1 0]] \\ \cline{2-2} \cline{6-7}
 & 5 &  &  &  & [-0.0411   0.7882  0.6207] & [[2 2 2 1 2]
 [2 3 3 2 1]
 [1 2 0 2 3]] \\
 \toprule
\end{tabular}
\caption{Hyper-parameters setting. The number of qubits (\# qubits) varies from 3 to 5, and the number of $\mu$ (\# $\mu$) from 3 to 6. $\beta$ is chosen as 0.3, 1, 3. ``LR'' denotes learning rate. The values of $\mu$ are sampled uniformly in the range of [-1, 1]. The term, likes ``[[0 2 1]
 [2 1 3]
 [0 3 3]]'', indicates there are three $E_l$'s and each has three qubits with the corresponding Pauli tensor product. Here ``0,1,2,3'' represent ``$I,X,Y,Z$'' respectively. For example, for the first sample, the corresponding Hamiltonian is taken as $H$=0.3408 $\cdot I\otimes Y \otimes X$ -0.6384 $\cdot Y\otimes X \otimes Z$ -0.4988 $\cdot I\otimes Z \otimes Z$.}
\label{table:param_setting}
\end{table*}

In the experimental setting, we randomly choose Pauli tensor products $E_\ell$ from $\{X,Y,Z,I\}^{\otimes n}$ and target coefficients $\bm\mu$ by a uniform distribution over $[-1,1]$. Specifically, we first vary the values of $\beta$ by fixing the number of $\mu$ and the number of qubits to explore our method's sensitivity to temperature. We similarly vary the number of $\mu$ and the number of qubits by fixing other hyper-parameters to explore our method's scalability. The actual values of these hyper-parameters sampled/chosen in each trial are concluded in Table~\ref{table:param_setting}. In addition, the deep, $D$, of the PQC $U(\bm\theta)$ is set according to the size of Hamiltonian. As number of qubits ranges from $n=3$ to $n=5$, the depth $D$ is set to be $10, 20, 40$, respectively.

In Table~\ref{table:param_setting}, Hamioltonian is represented by a tuple. Each number $0,1,2,3$ corresponds to matrices $I,X,Y,Z$, respectively. $\bm\mu$ denotes the interaction coefficients to be learned. For instance, [[0 2 1] [2 1 3] [0 3 3]] means that the Hamiltonian consists of three Pauli operators, where each term represents a Pauli operator, e.g., [0 2 1] means $I\otimes Y \otimes X$. Then, the parameters in the top second row represents the following Hamiltonian.
\begin{align}
    0.3408 I\otimes Y \otimes X -0.6384 Y\otimes X \otimes Z -0.4988 I\otimes Z\otimes Z.
\end{align}
Other Hamiltonians to be tested are represented in a similar fashion.

The results for these three aspects are illustrated in Fig.~\ref{fig_num}. We find that all curves converge to the values close to 0 in less than ten iterations, which shows our method is effective. In particular, our method works for different $\beta$ means that it is robust to temperature. And the results for the different number of $\mu$ and qubits reveals our method's scalability to a certain extent.

\begin{figure*}[htb]
	\centering
	\subfigure[Vary $\beta$]{
	\includegraphics[width=0.3\textwidth]{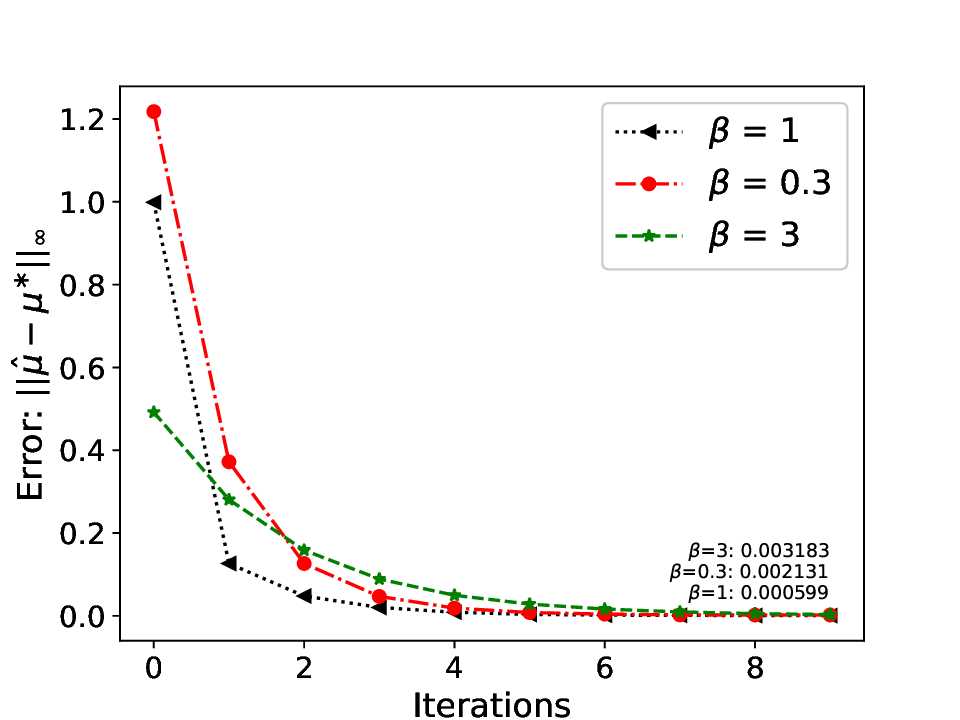}}
\subfigure[Vary \# $\mu$]{
		\includegraphics[width=0.3\textwidth]{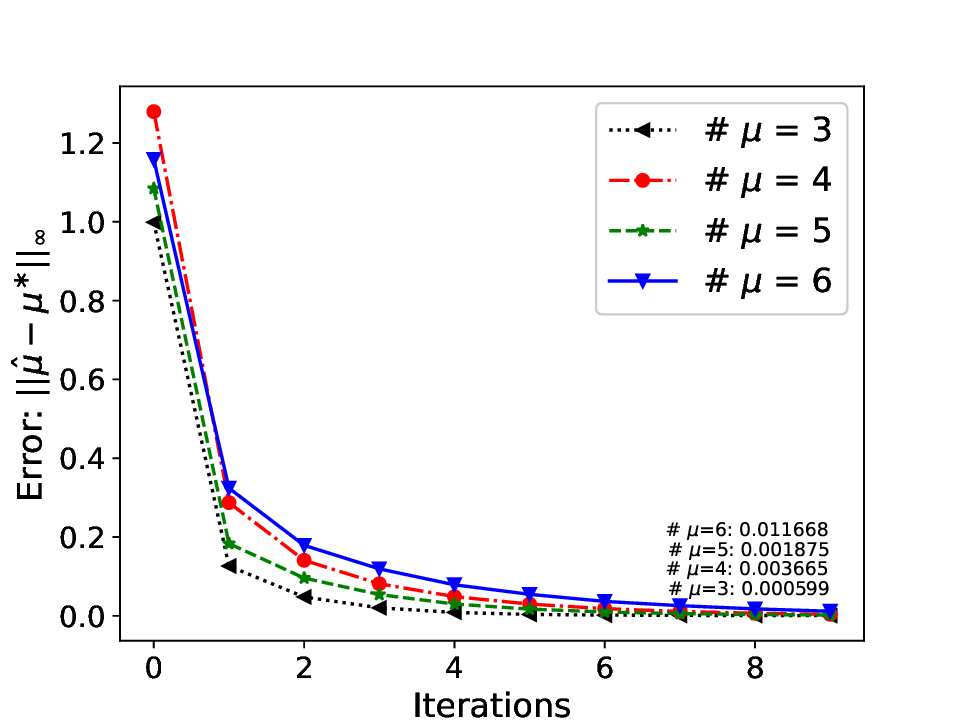}}
	\subfigure[Vary \# qubits]{
		\includegraphics[width=0.3\textwidth]{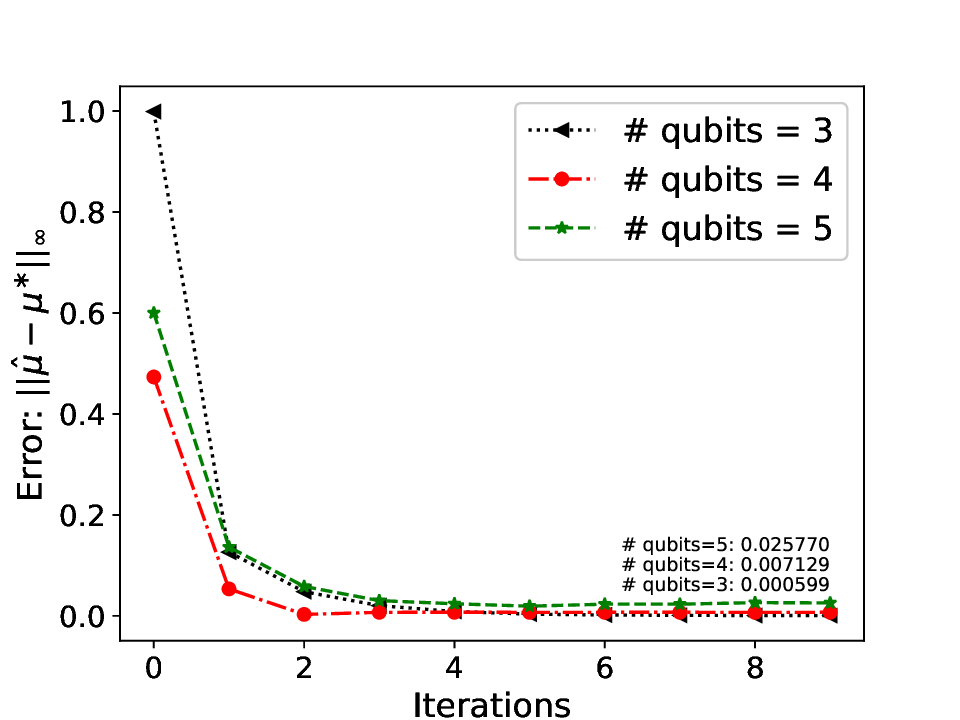}}
		\subfigure[Ising model]{
	\includegraphics[width=0.3\textwidth]{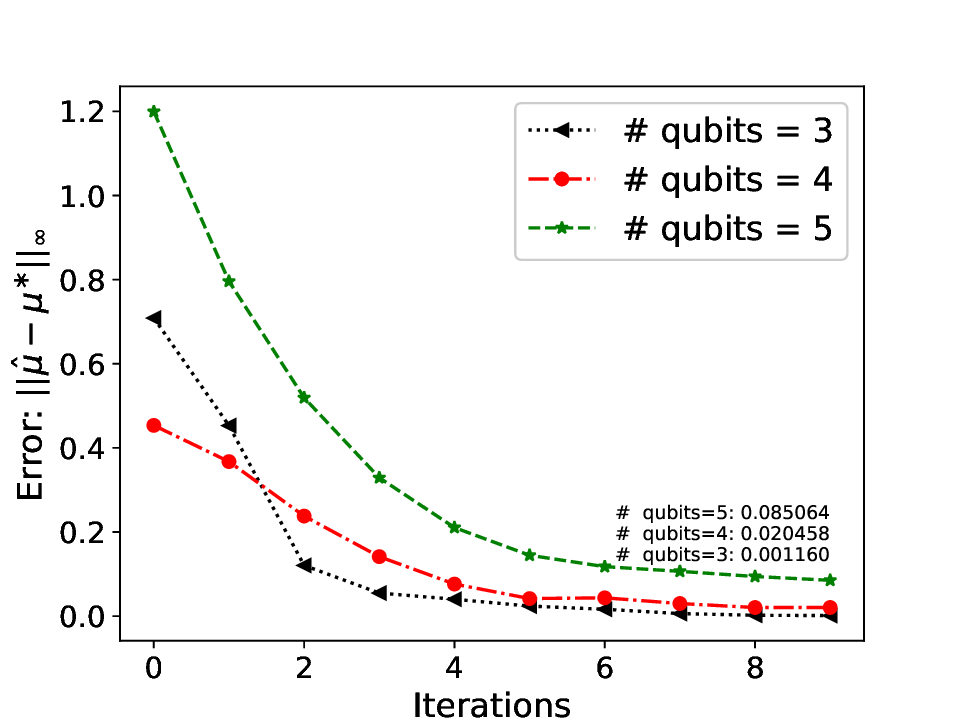}}
\subfigure[XY model]{
		\includegraphics[width=0.3\textwidth]{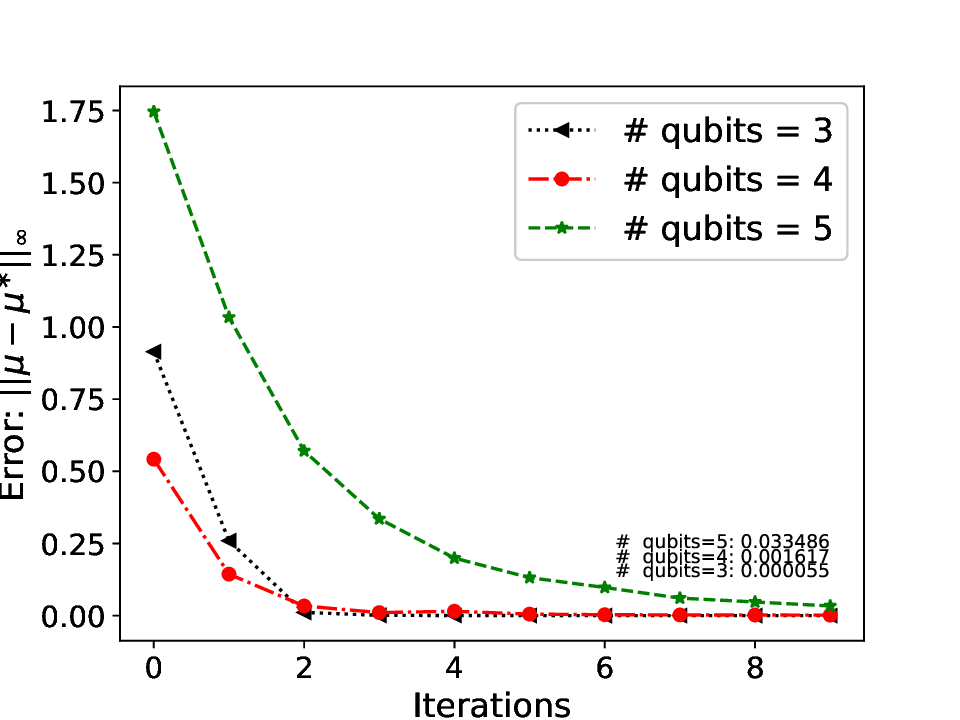}}
	\subfigure[Heisenberg model]{
		\includegraphics[width=0.3\textwidth]{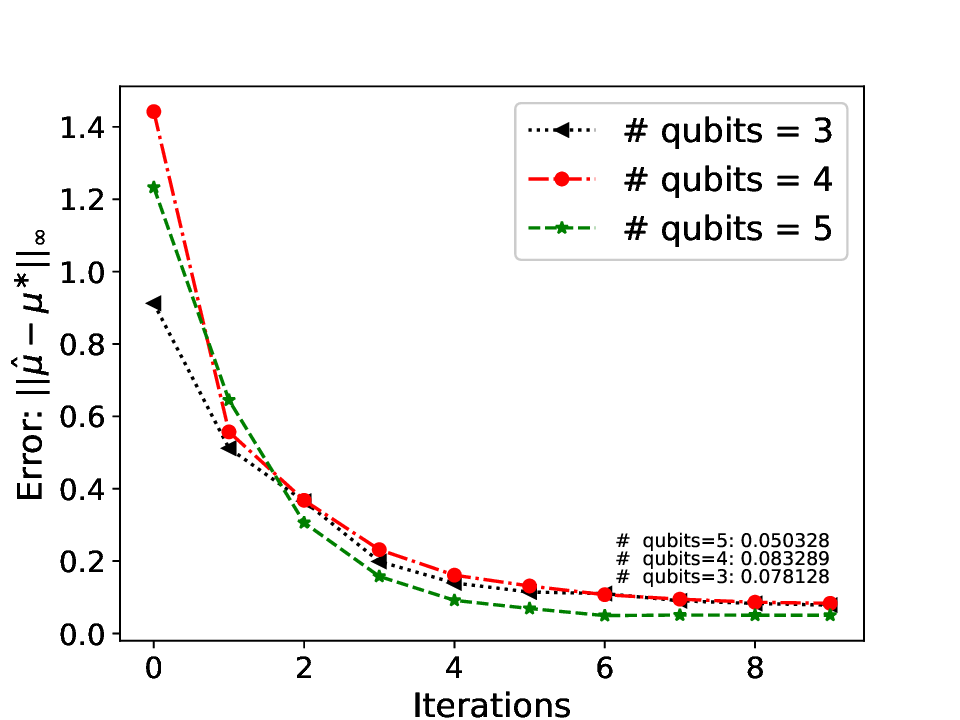}}
\caption{The curves in (a), (b), (c) represent the infinity norm of the error of $\mu$ with different $\beta$, different number of $\mu$, and different number of qubits, respectively. In (d), (e), (f), the curves represent the infinity norm of the error of $\mu$ for different many-body Hamiltonians with the number of qubits varies from 3 to 5. The numbers on the line represent the values of the last iteration. These numbers close to 0 indicate that our algorithm is effective.}
\label{fig_num}
\end{figure*}

\subsection{Quantum many-body models}\label{sec:many_body}
Here, we demonstrate the performance of our algorithm for quantum many-body models. Specifically, we consider the one-dimensional nearest-neighbor Ising model, XY model, and Heisenberg model. These many-body models are described by the Hamiltonians shown below:
\begin{align}
    &(\text{Ising model}) 
    \quad H_{0}=J_{0}\sum_{l=1}^{n} Z^{l}Z^{l+1}+h_{0}\sum_{l=1}^{n}X^{l},\\
    &(\text{XY model})  
    \quad H_{1}=J_{1}\sum_{l=1}^{n}( X^{l}X^{l+1}+ Y^{l}Y^{l+1}),\\
    &(\text{Heisenberg model})  
    \quad H_{2}=J_{2}\sum_{l=1}^{n}(X^{l}X^{l+1}+Y^{l}Y^{l+1}+Z^{l}Z^{l+1})+h_{2}\sum_{l=1}^{n}Z^{l},
\end{align}
where periodic boundary conditions are assumed (i.e., $X^{n+1}=X^{1}$, $Y^{n+1}=Y^{1}$, and $Z^{n+1}=Z^{1}$). Coefficient $J$ is the coupling constant for the nearest neighbor interaction, and $h$ represents the external transverse magnetic field. The experimental parameters are concluded in Table~\ref{tab:many_body_models}. 

We consider the models with a different number of qubits, varying from $n=3$ to $n=5$. The inverse temperature is set as $\beta=1$. The coefficients $J_{0},J_{1},J_{2}$ and $h_{0},h_{2}$ are sampled uniformly from a uniform distribution on [-1,1].  We also employ the parameterized quantum circuit $U(\bm\theta)$ in Fig.~\ref{fig:ansatz_SVQE} for the SVQE. And the depth of $U(\bm\theta)$ is also set as $D=10,20,40$ for different $n$. Moreover, the numerical results are shown in Fig.~\ref{fig_num}, which imply our method is applicable to recover quantum many-body Hamiltonians.
\begin{table}[t]
    \centering
      \tabcolsep 11pt
    \begin{tabular}{l|c|c|c|c|c}
    \toprule
      Many-body & \# qubits & \# $\mu$ & $\beta$ & LR & $\bm\mu$ \\
     models &&&&&  \\
        \hline
      \multirow{3}{*}{Ising model} & 3 & 6 & \multirow{3}{*}{1.0}  & 2.0 & $[J_0=0.1981, \quad h_0=0.7544]$\\\cline{2-3} \cline{5-6}
       &  4 & 8 &  & 1.0 & $[J_0=0.5296, \quad h_0= 0.4996]$ \\\cline{2-3} \cline{5-6}
        & 5 & 10 & & 0.5 & $[J_0=-0.6916,\quad h_0=0.4801]$\\\cline{1-6}
        \hline
        \multirow{3}{*}{XY model} & 3 & 6 & \multirow{3}{*}{1.0} & 1.0 & $J_1=-0.0839$ \\\cline{2-3} \cline{5-6}
         & 4 & 8 &  & 1.0 & $J_1=0.2883$ \\\cline{2-3} \cline{5-6}
         & 5 & 10 & & 0.6 & $J_1=-0.7773$ \\\cline{2-3} \cline{5-6}
         \hline
        \multirow{3}{*}{Heisenberg} & 3 & 12 & \multirow{3}{*}{1.0} & 1.0 & $[J_2= 0.0346,\quad h_2=0.8939]$ \\\cline{2-3} \cline{5-6}
         & 4 & 16 & & 1.0 & $[ J_2=-0.5831,\quad h_2= -0.0366]$\\\cline{2-3} \cline{5-6}
         & 5 & 20 & & 1.0 & $[J_2=0.2883,\quad h_2= -0.2385]$ \\
        \toprule
    \end{tabular}
    \caption{Hyper-parameters setting for many-body models. For each Hamiltonian model, the number of qubits varies from 3 to 5, and the number of $\mu$ is determined by the number of Pauli operators. ``LR" denotes learning rate. The values of $\mu$ are sampled uniformly in the range of $[-1,1]$.}
    \label{tab:many_body_models}
\end{table}

\subsection{Numerical results using fewer eigenvalues of Ising Hamiltonians}\label{sec:ising}
Notice that we use a PQC $U(\bm\theta)$ with deep depths to learn the full spectrum of small-sized Hamiltonians in Secs.~\ref{sec:random}-\ref{sec:many_body}, which may be beyond the capacity of NISQ devices. However, this section demonstrates the efficacy of HQHL in learning the Ising Hamiltonians using a circuit with reduced depth, where few eigenvalues (instead of the full spectrum) are learned. In particular, only halved circuit depths are needed for Hamiltonians with 3-5 qubits, given in Table \ref{tab:many_body_models}. Furthermore, the performance on $n=6$ and $n=7$-qubit Ising models, given below, is tested as well. The presented results imply the potential efficacy of our approach for larger Hamiltonians.
\begin{align}
    H=0.1981\sum_{l=1}^{n} Z^{l}Z^{l+1}+0.7544\sum_{l=1}^{n}X^{l}.
\end{align}

\begin{table}[htb]
\footnotesize
    \tabcolsep 11pt
    \centering
    \begin{tabular}{c|c|c|c|c|c}
    \toprule
        \multirow{2}{*}{\# qubits $n$} & \multirow{2}{*}{weights $\mathbf{q}$} & \multirow{2}{*}{\# $\mu$} & \multirow{2}{*}{depth $D$}  & \multirow{2}{*}{LR} & \multirow{2}{*}{$\#\lambda$}\\
         &  &  &  & & \\
        \hline
        3 & $(0.1,0.2,0.3,0.4,0,\ldots)_{8}$ & 6 & $5$ & 0.4 & 4 \\
        \hline
        4 & $(0.1, 0.15, 0.2, 0.25, 0.3, 0,\ldots)_{16}$ & 8 & $10$ & 0.55 & 5\\
        \hline
        5 & $(0.1, 0.15, 0.2, 0.25, 0.3, 0,\ldots)_{32}$ & 10 & $20$ & 0.7 & 5 \\
        \hline
        6 & $(1/21,2/21,3/21,4/21,5/21,6/21,0,\ldots)_{64}$ & 12 & 30 & 0.55 & 6\\
        \hline
        7 & $(1/21,2/21,3/21,4/21,5/21,6/21,0,\ldots)_{128}$ & 14 & 40 & 0.6 & 6\\
         \toprule
    \end{tabular}
    \caption{Parameters setting for HQHL. The script index means the length of the tuple, e.g., $()_{8}$ indicates the tuple consists of $8$ entries. The notation $0,\ldots$ means the entries following $0$ are all zeros as well. Notation $\#\lambda$ means the number of eigenvalues we learned. Please note that we omit the $\beta=1$ in the table.}
    \label{tab:reduced_circuit}
\end{table}

To reduce the number of eigenvalues to be learned, we tune the weights $\mathbf{q}$ of the SVQE such that the $U(\bm\theta)$ can output several smallest eigenvalues. For instance, five eigenvalues are learned for $4$ \& $5$-qubit Ising Hamiltonians, and four eigenvalues are learned for $3$-qubit Ising Hamiltonians. As a result, the circuit depth of the used $U(\bm\theta)$ is significantly reduced. For example, we only use depth $D=20$ to learn the coefficients with precision $0.05$ for $5$-qubit Ising models. While, in Sec.~\ref{sec:many_body}, we use the depth $D=40$. Moreover, we find out that using a circuit with 35 depths suffices to learn well the 6-qubit Ising model, where SVQE only learns six eigenvalues. Using the circuit with 40 depths could also reach a precision of 0.05 for the 7-qubit Ising Hamiltonian. The details of parameters setting (weights, depth, learning rate, etc.) are given in Table \ref{tab:reduced_circuit}. Besides, the experimental results are depicted in Figure~\ref{fig:reduced_Ising_model}.

\begin{figure}[htb]
    \centering
    \includegraphics[width=0.4\textwidth]{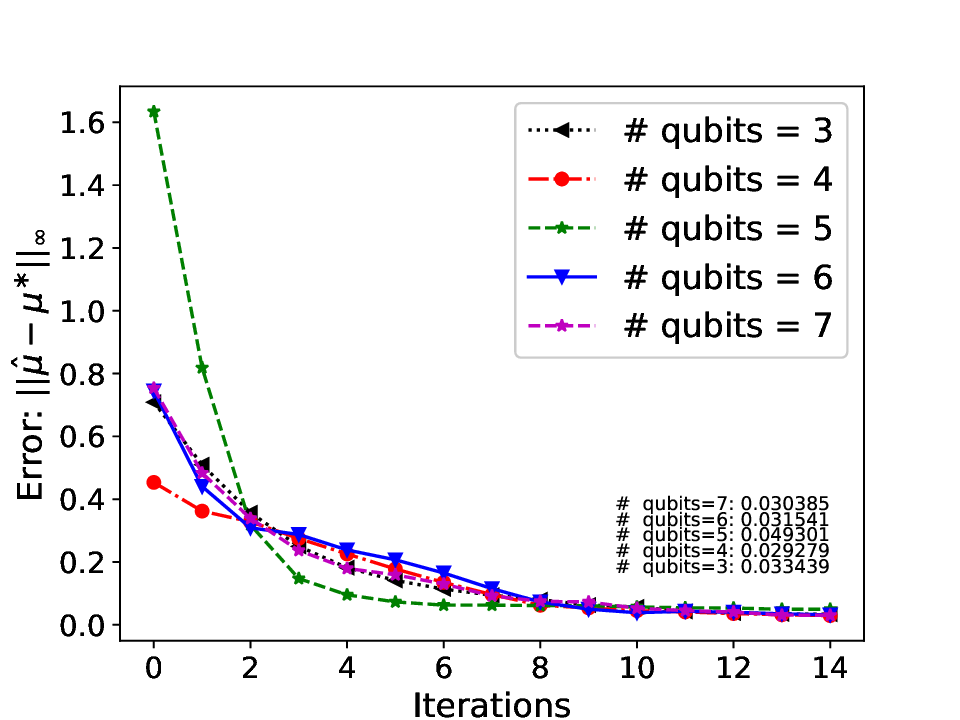}
    \caption{Experimental results by using fewer eigenvalues. Each line corresponds to the results by running HQHL with Ising Hamiltonians of different sizes. Results show that using halved circuit depth, compared to the setting in Sec.~\ref{sec:many_body}, could learn coefficients up to precision $0.05$ for different sized Ising models and a different number of $\mu$.}
    \label{fig:reduced_Ising_model}
\end{figure}

\section{Conclusion}\label{sec_conclusion}
We have proposed a hybrid quantum-classical Hamiltonian learning algorithm that employs a gradient-descent method to find the desired interaction coefficients. We achieve this purpose by unifying the variational quantum algorithms (VQAs) with the strategy proposed in \cite{Anshu2020}. To this end, we develop several subroutines: log-partition function estimation, stochastic variational quantum eigensolver (SVQE), and gradient estimation. In SVQE, we propose a method to learn the full/partial spectrum of the Hamiltonian and use the importance sampling to circumvent the resources in the loss evaluation. In the log-partition function, we propose a method that combines the parameterized quantum circuits and convex optimization to find the global minimum of the free energy as well as compute the log-partition function. In gradient estimation, we present a procedure to compute the gradient of the objective function costing polynomially many resources. Finally, we conduct numerical experiments to demonstrate the effectiveness of our approach with randomly generated Hamiltonians and selected many-body Hamiltonians. In consequence, we show that learning the full spectrum of Hamiltonians in the learning process could produce high-precision estimates of the desired interaction coefficients. Moreover, we also show that partially learning several smallest eigenvalues of Ising Hamiltonians could derive estimates up to a precision of 0.05. Overall, this paper develops a concrete near-term quantum algorithm for Hamiltonian learning and demonstrates the effectiveness as well, which has potential applications in quantum device certification, quantum simulation, and quantum machine learning.

We believe our approach would shed lights on near-term quantum applications. For example, SVQE might enrich the VQE family in the fields of molecules and materials. Moreover, as many problems in computer science can be framed as partition function problems (e.g., counting coloring), our method may contribute to these fields as well. Furthermore, it is reasonable to explore our algorithm's applications in quantum machine learning~\cite{Shingu2020}, quantum error correction~\cite{Valenti2019}, and tomography \cite{Kieferova2017}.

\section*{Acknowledgements}
Y. W. and G. L. acknowledge support from the Australian Research Council (Grant No: DP180100691) and the Baidu-UTS AI Meets Quantum project.  G. L. acknowledges the financial support from China Scholarship Council (No. 201806070139). 

\bibliographystyle{apsrev4-1}
\bibliography{Reference}

\appendix


\onecolumngrid
\begin{center}
{\textbf{\Large Supplementary Material}}
\end{center}

\renewcommand{\theequation}{S\arabic{equation}}
\renewcommand{\thetheorem}{S\arabic{theorem}}
\renewcommand{\thelemma}{S\arabic{lemma}}
\renewcommand{\theproposition}{\arabic{proposition}}
\setcounter{equation}{0}
\setcounter{figure}{0}
\setcounter{table}{0}

\section{Proofs for Eqs.~\eqref{eq_relation_pre}-\eqref{eq_objective_pre}}\label{sec_appendix_par}
Consider a Hamiltonian $H\in\mathbb{C}^{N\times N}$ and a constant $\beta>0$, then the system's free energy is given by $F(\rho)=\tr(H\rho)-\beta^{-1}S(\rho)$. Recall the fact~\cite{nielsen2002quantum} that
\begin{equation}
    S(\rho)\leq-\sum_{j=1}^{N}\rho_{jj}\log\rho_{jj},
\end{equation}
where $\rho_{jj}$ are the diagonal elements of quantum state $\rho$. Using this fact, for any state $\rho$, we can find a lower bound on free energy in the sense that
\begin{align}
    F(\rho)\geq\tr(H\rho)+\beta^{-1}\sum_{j=1}^{N}\rho_{jj}\log\rho_{jj}.\label{free_energy_relation}
\end{align}
On the other hand, let $U$ be a unitary such that $H=U\Lambda U^{\dagger}$, where $\Lambda={\rm diag}(\lambda_{1},...,\lambda_{N})$ is a diagonal matrix. Let $\widetilde{\rho}={\rm diag}(\rho_{11},...,\rho_{NN})$ be the diagonal matrix consisting of $\rho$'s diagonal elements and let $\sigma=U^{\dagger}\widetilde{\rho}U$. It is easy to verify that $\tr(H\rho)=\tr(\Lambda \sigma)$. Furthermore, taking this relation into Eq.~\eqref{free_energy_relation}'s right hand side, we can find that
\begin{align}
    F(\rho)\geq \tr(\Lambda \sigma)-\beta^{-1}S(\sigma).\label{eq_fe_inequality}
\end{align}
Notice that Eq.~\eqref{eq_fe_inequality}'s right-hand side is equal to $F(\widetilde{\rho})$, then we have
\begin{align}
    F(\rho)\geq F(\widetilde{\rho}).\label{eq_fe_inequality_new}
\end{align}
The inequality in Eq.~\eqref{eq_fe_inequality_new} shows that free energy's global optimum is commuting with the Hamiltonian $H$. 

According to the above discussion, we can rewrite the optimization program of finding free energy's minimal value as follows
\begin{align}
    \min_{\rho} F(\rho) = \min_{\mathbf{p}}\sum_{j=1}^{N}\lambda_{j}p_{j}+\beta^{-1}\sum_{j=1}^{N}p_{j}\log p_{j}, \label{pro_optimization}
\end{align}
where $\mathbf{p}$ represents an arbitrary probability distribution. Eq.~\eqref{pro_optimization}'s right-hand side can be solved using the Lagrange multiplier method, and the optimum is given below:
\begin{align}
    \mathbf{p}^*\coloneqq\frac{1}{Z}(e^{-\beta\lambda_{1}},...,e^{-\beta\lambda_{N}}),
\end{align}
with $Z:=\sum_{j=1}^{N}e^{-\beta\lambda_{j}}$.

Finally, the equalities in Eqs.~\eqref{eq_relation_pre}-\eqref{eq_objective_pre} can be proved by taking $\mathbf{p}^*$ into Eq.~\eqref{pro_optimization}'s right-hand side and computing the minimal value.

\section{Proof for Proposition~\ref{theorem_logpartitionfunction_pre}}\label{sec_theorem_3}
\begin{lemma}\label{le_spectral}
For any parameterized Hamiltonian $H(\bm\nu)=\sum_{\ell=1}^{m}\nu_{\ell}E_{\ell}$ with $E_{\ell}\in\{X,Y,Z,I\}^{\otimes n}$, we have
\begin{align}
    \parallel H(\bm\nu)\parallel\leq \sqrt{m}\cdot\parallel\bm\nu\parallel_2.
\end{align}
where $\|\cdot\|$ denotes the spectral norm and $\|\cdot\|_2$ is the $\ell_2$-norm.
\end{lemma}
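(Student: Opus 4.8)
The plan is to bound the spectral norm of $H(\bm\nu)=\sum_{\ell=1}^{m}\nu_{\ell}E_{\ell}$ by combining the triangle inequality for the spectral norm with the Cauchy--Schwarz inequality. First I would recall the elementary fact that each Pauli tensor product $E_{\ell}\in\{X,Y,Z,I\}^{\otimes n}$ is both Hermitian and unitary, so it squares to the identity and therefore has spectral norm exactly $\|E_{\ell}\|=1$. This is the only structural input we need about the operators $E_{\ell}$.

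Next I would apply the triangle inequality (subadditivity of the spectral norm) to split the sum:
\begin{align}
    \|H(\bm\nu)\| = \left\| \sum_{\ell=1}^{m}\nu_{\ell}E_{\ell}\right\| \leq \sum_{\ell=1}^{m}|\nu_{\ell}|\,\|E_{\ell}\| = \sum_{\ell=1}^{m}|\nu_{\ell}|.
\end{align}
This reduces the claim to an inequality between the $\ell_1$-norm and the $\ell_2$-norm of the coefficient vector $\bm\nu$. The final step is to invoke Cauchy--Schwarz on the vector $(|\nu_1|,\dots,|\nu_m|)$ against the all-ones vector, giving $\sum_{\ell=1}^{m}|\nu_{\ell}| \leq \sqrt{m}\,\|\bm\nu\|_2$, which when chained with the previous bound yields the claimed $\|H(\bm\nu)\|\leq\sqrt{m}\,\|\bm\nu\|_2$.

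There is essentially no hard part here: the result is a routine norm estimate, and the only thing to be careful about is correctly identifying that the relevant norm bound on each $E_{\ell}$ is $1$ (a consequence of unitarity) rather than some dimension-dependent quantity. The slight looseness introduced by the $\ell_1$-to-$\ell_2$ passage via Cauchy--Schwarz is exactly what produces the $\sqrt{m}$ factor, and this is tight in the worst case (equal-magnitude coefficients), so no sharper bookkeeping is warranted for the intended application to the sample-complexity variance bounds.
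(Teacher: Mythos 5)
Your proof is correct, but it takes a genuinely different route from the paper's. The paper diagonalizes $H(\bm\nu)=U\Lambda U^{\dagger}$ and works eigenvalue by eigenvalue: writing $\lambda_{j}=\sum_{\ell=1}^{m}\nu_{\ell}\bra{\psi_j}U^{\dagger}E_{\ell}U\ket{\psi_j}$, it applies Cauchy--Schwarz to get $(\lambda_j)^2\leq\|\bm\nu\|_2^2\cdot\sum_{\ell=1}^{m}(\bra{\psi_j}U^{\dagger}E_{\ell}U\ket{\psi_j})^2\leq m\|\bm\nu\|_2^2$, using that each Pauli expectation value lies in $[-1,1]$ --- the same structural input ($\|E_{\ell}\|=1$) that you use, but deployed on the vector of expectation values rather than on the coefficient vector against the all-ones vector. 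Your argument via subadditivity of the operator norm is more elementary (no diagonalization needed) and yields the sharper intermediate bound $\|H(\bm\nu)\|\leq\|\bm\nu\|_1$, which is never worse than $\sqrt{m}\|\bm\nu\|_2$ and is in fact the quantity the paper's measurement-cost analysis (Lemma~\ref{le_ham_mean}) already works with; the $\ell_1$-to-$\ell_2$ relaxation is performed only to phrase the sample complexity in terms of $\|\bm\nu\|_2$. The paper's eigenvalue-wise Cauchy--Schwarz, by contrast, leaves visible room for exploiting structure among the $E_{\ell}$: if one knew, say, that the $E_{\ell}$ pairwise anticommute, then $\sum_{\ell}(\bra{\psi_j}U^{\dagger}E_{\ell}U\ket{\psi_j})^2\leq 1$ and the factor $m$ would disappear, a refinement invisible from the triangle-inequality route. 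As written, though, both arguments terminate at the identical bound, and your tightness remark (equal-magnitude coefficients, achieved when the $E_{\ell}$ commute and share a common eigenvector) is accurate.
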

\begin{proof}
Let $U$ be the unitary that diagonalizes the Hamiltonian $H(\bm\nu)$, and then we can use the following form to represent $H(\bm\nu)$.
\begin{align}
    H(\bm\nu)=\sum_{j=1}^{N}\lambda_{j}\cdot U\op{\psi_j}{\psi_j}U^\dagger,
\end{align}
where $\ket{\psi_1},...,\ket{\psi_N}$ are the computational basis. 

Typically, each eigenvalue is represented as follows:
\begin{align}
    \lambda_{j}&=\bra{\psi_j}U^{\dagger}H(\bm\nu)U\ket{\psi_j}\\
    &=\sum_{\ell=1}^{m}\nu_{\ell} \bra{\psi_j}U^{\dagger}E_{\ell}U\ket{\psi_j}
\end{align}
Then, applying the Cauchy-Schwarz inequality leads to an upper bound on each eigenvalue:
\begin{align}
    (\lambda_{j})^2\leq \sum_{\ell=1}^{m}(\nu_{\ell})^2 \cdot\sum_{\ell=1}^{m}(\bra{\psi_j}U^{\dagger}E_{\ell}U\ket{\psi_j})^2.
\end{align}
Meanwhile, recalling that all $E_{\ell}$ are Pauli matrix tensor product, we can obtain an upper bound below:
\begin{align}
    (\lambda_{j})^2\leq m\sum_{\ell=1}^{m}(\nu_{\ell})^2. \label{eq_cs}
\end{align}
Ranging $j$ in $\{1,...,N\}$ in Eq.~\eqref{eq_cs}, the maximal eigenvalue is upper bounded by $\sqrt{m}\|\bm\nu\|_2$, validating the claim.
\end{proof}

\renewcommand\theproposition{\ref{theorem_logpartitionfunction_pre}}
\setcounter{proposition}{\arabic{proposition}-1}
\begin{proposition}
For any parameterized Hamiltonian $H(\bm\nu)=\sum_{\ell=1}^{m}\nu_{\ell}E_{\ell}$ with $E_{\ell}\in\{X, Y, Z, I\}^{\otimes n}$ and $\bm\nu\in\mathbb{R}^{m}$ and constant $\beta>0$, suppose we are given access to a parameterized quantum circuit $U(\bm\theta)$ that can learn $H(\bm\nu)$'s eigenvectors, then the objective function $C(\mathbf{p})$ can be computed up to precision $\epsilon$ with probability larger than $2/3$ by taking $T=O({m\|\bm\nu\|_2^2}/{\epsilon^2})$ samples. Furthermore, the probability can be improved to $1-\eta$ costing an additional multiplicative factor of $O(\log(1/\eta))$.
\end{proposition}

\begin{proof}
Since the expression $\sum_{j=1}^{N}p_{j}\lambda_{j}$ is regarded as an expectation, then we can estimate it by the sample mean with high accuracy and probability. To be specific, let $X$ denote a random variable that takes value $\lambda_{j}$ with probability $p_{j}$. Then, this expression can be written as
\begin{align}
    \mathbf{E}[X]=\sum_{j=1}^{N}p_{j}\lambda_{j}.
\end{align}
Furthermore, recall Chebyshev's inequality, then we have
\begin{align}
    \Pr\left(|\bar{X}-\mathbf{E}[X]|\leq\epsilon\right)\geq 1-\frac{\mathbf{Var}[X]}{T\epsilon^2}.
\end{align}
where $\bar{X}=\frac{1}{T}(X_1+X_2+...+X_T)$ and $\mathbf{Var}[X]$ is the variance of $X$. Technically, we can set large $T$ to increase the probability. Here, we only need to choose $T$ such that 
\begin{align}
    \frac{\mathbf{Var}[X]}{T\epsilon^2}=\frac{2}{3}.
\end{align}

Note that the second moment $\mathbf{E}[X^2]$ bounds the variance $\mathbf{Var}[X]$. Meanwhile, the second moment of $X$ is bounded by the squared spectral norm of $H$, shown below.
\begin{align}
\mathbf{E}[X^2]&=\sum_{j=1}^{N}p_{j}(\lambda_j)^2\\
&\leq \sum_{j=1}^{N}p_{j}\|H(\bm\nu)\|^2\\
&=\|H(\bm\nu)\|^2.
\end{align}
The inequality is due to the fact that each eigenvalue is less than the spectral norm. Apply Lemma~\ref{le_spectral}, then we will obtain an bound on $T$:
\begin{align}
    T=\frac{3\mathbf{Var}[X]}{2\epsilon^2}\leq \frac{3\mathbf{E}[X^2]}{2\epsilon^2}\leq \frac{3m\|\bm\nu\|_2^2}{2\epsilon^2}.
\end{align}

Lastly, according to the Chernoff bound, we can boost the probability to $1-\eta$ for any $\eta>0$ by repeatedly computing the sample mean $O(\log(1/\eta))$ times and taking the median of all sample means.
\end{proof}

\section{Proof for Proposition~\ref{theorem_SVQE_pre}}
\label{sec_theorem_5}
\begin{lemma}\label{le_ham_mean}
Consider a parameterized Hamiltonian $H(\bm\nu)=\sum_{\ell=1}^{m}\nu_{\ell}E_{\ell}$ with $E_{\ell}\in\{X,Y,Z,I\}^{\otimes n}$. For any unitary $U$ and state $\ket{\psi}$, estimating the value $\bra{\psi}U^{\dagger}H(\bm\nu)U\ket{\psi}$ up to precision $\epsilon$ with probability at least $1-\eta$ requires a sample complexity of
\begin{align}
    O\left(\frac{m\|\bm\nu\|_1^2\log(m/\eta)}{\epsilon^2}\right).
\end{align}
\end{lemma}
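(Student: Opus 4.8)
The plan is to exploit the linearity of the target quantity and reduce the task to estimating $m$ bounded random variables, one per Pauli term, then recombine them with the known coefficients. Writing $H(\bm\nu)=\sum_{\ell=1}^{m}\nu_{\ell}E_{\ell}$, linearity of the expectation value gives $\bra{\psi}U^{\dagger}H(\bm\nu)U\ket{\psi}=\sum_{\ell=1}^{m}\nu_{\ell}\,a_{\ell}$, where $a_{\ell}:=\bra{\psi}U^{\dagger}E_{\ell}U\ket{\psi}$, so it suffices to estimate each $a_{\ell}$ separately. Since each $E_{\ell}$ is a tensor product of Pauli matrices, it has eigenvalues $\pm1$; hence measuring $E_{\ell}$ on the prepared state $U\ket{\psi}$ yields an outcome $X_{\ell}\in\{-1,+1\}$ with $\mathbf{E}[X_{\ell}]=a_{\ell}$. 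In particular every single-shot random variable lies in the bounded interval $[-1,1]$, which is exactly what a concentration bound needs.

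First I would fix a per-term precision and failure budget. Allocating the error uniformly across terms, I set the target accuracy for each $a_{\ell}$ to $\epsilon_{\ell}=\epsilon/\|\bm\nu\|_{1}$ and demand a per-term failure probability at most $\eta/m$. Because the sample mean $\bar X_{\ell}=\frac{1}{n_{\ell}}\sum_{i}X_{\ell}^{(i)}$ averages i.i.d.\ variables bounded in $[-1,1]$, Hoeffding's inequality gives $\Pr(|\bar X_{\ell}-a_{\ell}|\ge\epsilon_{\ell})\le 2\exp(-n_{\ell}\epsilon_{\ell}^{2}/2)$. Forcing the right-hand side below $\eta/m$ yields $n_{\ell}=O\!\big(\epsilon_{\ell}^{-2}\log(m/\eta)\big)=O\!\big(\|\bm\nu\|_{1}^{2}\log(m/\eta)/\epsilon^{2}\big)$ measurements per term. (One could equally boost a rough constant-probability estimate by a median-of-means argument via Chernoff bounds, but Hoeffding delivers the clean bound directly.)

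Then I would assemble the estimator $\widehat A:=\sum_{\ell}\nu_{\ell}\bar X_{\ell}$ for the true value $A:=\sum_{\ell}\nu_{\ell}a_{\ell}$ and control its total error. A union bound over the $m$ terms shows that with probability at least $1-\eta$ every estimate obeys $|\bar X_{\ell}-a_{\ell}|\le\epsilon_{\ell}$; conditioned on this event, the triangle inequality gives $|\widehat A-A|\le\sum_{\ell}|\nu_{\ell}|\,|\bar X_{\ell}-a_{\ell}|\le\epsilon_{\ell}\|\bm\nu\|_{1}=\epsilon$, using the choice $\epsilon_{\ell}=\epsilon/\|\bm\nu\|_{1}$. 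Summing the per-term counts, the total measurement complexity is $\sum_{\ell}n_{\ell}=m\cdot O\!\big(\|\bm\nu\|_{1}^{2}\log(m/\eta)/\epsilon^{2}\big)=O\!\big(m\|\bm\nu\|_{1}^{2}\log(m/\eta)/\epsilon^{2}\big)$, matching the claim.

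The concentration steps are entirely standard, so the real design choice—and hence the crux of the argument—is the partition of the error and failure budget across the $m$ Pauli terms. The uniform precision allocation $\epsilon_{\ell}=\epsilon/\|\bm\nu\|_{1}$ is precisely what converts the $\ell_{1}$-norm of $\bm\nu$ (entering through the triangle inequality on the coefficients) together with the $[-1,1]$ boundedness of the Pauli outcomes into the stated $\|\bm\nu\|_{1}^{2}$ scaling. The hard part is really just recognizing that a different split would trade the norm dependence against the $m$ prefactor, so I would emphasize that this symmetric choice is what reproduces the bound exactly as stated.
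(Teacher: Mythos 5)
Your proposal is correct and follows essentially the same route as the paper's proof: decompose $\bra{\psi}U^{\dagger}H(\bm\nu)U\ket{\psi}$ into its $m$ Pauli expectations, estimate each $\pm1$-valued measurement mean to precision $\epsilon/\|\bm\nu\|_1$ with per-term failure probability $\eta/m$, then combine via the triangle inequality and a union bound. The only difference is cosmetic: where the paper bounds the second moment ($\mathbf{E}[X^2]\leq1$) and asserts the logarithmic sample count, you invoke Hoeffding's inequality explicitly, which if anything makes the concentration step cleaner.
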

\begin{proof}
First, we rewrite the value $\bra{\psi}U^{\dagger}H(\bm\nu)U\ket{\psi}$ as follows:
\begin{align}
    \bra{\psi}U^{\dagger}H(\bm\nu)U\ket{\psi}=\sum_{\ell=1}^{m}\nu_{\ell}\bra{\psi}U^{\dagger}E_\ell{}U\ket{\psi}.
\end{align}

Second, we count the required number of measurements to estimate the value $\bra{\psi}U^{\dagger}E_\ell{}U\ket{\psi}$ up to precision $\epsilon/\|\bm\nu\|_{1}$ with probability at least $1-\eta/m$, where $\|\cdot\|_1$ denotes the $\ell_1$-norm. Since the Pauli operator, $E_{\ell}$, has eigenvalues $\pm1$, we can partition $E_{\ell}$'s eigenvectors into two sets, corresponding to positive and negative eigenvalues, respectively. For convenience, we call the measurement outcome corresponding to eigenvalue 1 as the positive measurement outcome and the rest as the negative measurement outcome. We define a random variable $X$ in the sense that 
\begin{align}
    X=\left\{
    \begin{array}{cc}
    1, & \Pr{[\text{Positive measurement outcome}]}\\
    -1, & \Pr{[\text{Negative measurement outcome}]}
    \end{array}
    \right.
\end{align}
It is easy to verify that $\mathbf{E}[X]=\bra{\psi}U^{\dagger}E_{\ell}U\ket{\psi}$. Thus, an approach to compute value $\bra{\psi}U^{\dagger}E_{\ell}U\ket{\psi}$ is computing an estimate for the expectation $\mathbf{E}[X]$. Meanwhile, consider that $\mathbf{E}[X^2]\leq 1$, then the required number of samples is $O(\|\bm\nu\|_1^2\log(m/\eta)/\epsilon^2)$.

Lastly, for $\bra{\psi}U^{\dagger}H(\bm\nu)U\ket{\psi}$, the estimate's maximal error is $\|\bm\nu\|_1\cdot\epsilon/\|\bm\nu\|_1=\epsilon$. By union bound, the overall failure probability is less than $m\cdot \eta/m=\eta$. Thus, the claim is proved.
\end{proof}

\renewcommand\theproposition{\ref{theorem_SVQE_pre}}
\setcounter{proposition}{\arabic{proposition}-1}
\begin{proposition}
Consider a parameterized Hamiltonian $H(\bm\nu)=\sum_{\ell=1}^{m}\nu_{\ell}E_{\ell}$ with Pauli operators $E_{\ell}\in\{X,Y,Z,I\}^{\otimes n}$ and constants $\nu_{\ell}\in[-1,1]$. Given any constants $\epsilon>0$, $\eta\in(0,1)$, $\beta>0$, the objective function $M(\bm\theta)$ in SVQE can be estimated up to precision $\epsilon$ with probability at least $1-\eta$, costing $TD$ samples with $T=O(m\|\bm\nu\|_2^2/\epsilon^2)$ and $D=O(\log(1/\eta))$. Besides, the total number of measurements is given below:
\begin{align}
    O\left(\frac{mTD\|\bm\nu\|_1^2(n+\log(m/\eta))}{\epsilon^2}\right).
\end{align}
\end{proposition}
\begin{proof}
Let $Y$ denote a random variable that takes value $\bra{\psi_j}U^{\dagger}(\bm\theta)H(\bm\nu)U(\bm\theta)\ket{\psi_j}$ with probability $q_{j}$, then the objective function $M(\bm\theta)$ can be rewritten as 
\begin{align}
    \mathbf{E}[Y]=M(\bm\theta).
\end{align}
By Chebyshev's inequality, the expectation can be computed by taking enough samples of $Y$ and averaging them. Note that the variance of $Y$ determines the number of samples, and the absolute value $Y$ is less than the spectral norm $\|H(\bm\nu)\|$, i.e., $|Y|\leq \|H(\bm\nu)\|$. Along with Lemma~\ref{le_spectral}, it is easy to see that the required number of $Y$'s samples for obtaining an estimate with error $\epsilon/2$ and probability larger than $2/3$ is $T=O(m\|\bm\nu\|_2^2/\epsilon^2)$. Furthermore, by Chernoff bounds, the probability can be improved to $1-\eta/2$ at an additional cost of multiplicative factor of $D=O(\log(1/\eta))$.

On the other hand, each sample $Y$'s value has to be determined by performing the measurement. Since $\ket{\psi_j}$ is a computational basis, hence $Y$ can take at most $2^n$ different values. To ensure the probability for estimating $\mathbf{E}[Y]$ larger than $1-\eta$, the probability of each estimate $\bra{\psi_j}U^{\dagger}(\bm\theta)H(\bm\nu)U(\bm\theta)\ket{\psi_j}$ only needs to be at least $1-\eta/2^{n+1}$. By union bound, the overall failure probability is at most $\eta/2+\eta\cdot\frac{TD}{2^{n+1}}<\eta$ (For large Hamiltonians, the number of samples $TD$ can be significantly less than dimension $2^{n}$). Besides, according to Lemma~\ref{le_ham_mean}, $\bra{\psi_j}U^{\dagger}(\bm\theta)H(\bm\nu)U(\bm\theta)\ket{\psi_j}$'s estimate within accuracy $\epsilon/2$ and probability $1-\eta/2^{n+1}$ requires a sample complexity of $O(m\|\bm\nu\|_1^2(n+\log(m/\eta))/\epsilon^2)$. Thus, the overall number of measurements is the product of the number of samples $TD=O(m\|\bm\nu\|_2^2\log(1/\eta)/\epsilon^2)$ and each sample's sample complexity $O(m\|\bm\nu\|_1^2(n+\log(m/\eta))/\epsilon^2)$. In other words, the objective function $M(\bm\theta)$'s estimate within error $\epsilon$ and probability $1-\eta$ requires a sample complexity of 
\begin{align*}
    O\left(TD\cdot\frac{m\|\bm\nu\|_1^2(n+\log(m/\eta))}{\epsilon^2}\right)=O\left(\frac{m^2\|\bm\nu\|_1^2\|\bm\nu\|_2^2\log(1/\eta)(n+\log(m/\eta))}{\epsilon^4}\right).
\end{align*}
\end{proof}

\section{Proof for Proposition~\ref{theoremforgibbsstatepreparation}}\label{sec:app_theoremforgibbsstatepreparation}
\begin{lemma}\label{le_re_eigen}
Let $\widehat{\bm\lambda}=(\widehat{\lambda}_1,...,\widehat{\lambda}_N)$ denote the estimated eigenvalues from SVQE and define a function $G(\mathbf{p})$ as follows:
\begin{align}
    G(\mathbf{p})\coloneqq \sum_{j=1}^{N} p_{j}\widehat{\lambda}_j+\beta^{-1}\sum_{j=1}^N p_{j}\log p_{j}. \label{loss_g}
\end{align}
Let $\widehat{\mathbf{p}}^*$ be the global optimal point of $G(\mathbf{p})$, that is, for any probability distribution $\mathbf{p}$, we have $G(\widehat{\mathbf{p}}^*)\leq G(\mathbf{p})$. Meanwhile, suppose $\mathbf{p}^*$ is the global optimal point of $C(\mathbf{p})$. Then, we have
\begin{align}
    |G(\widehat{\mathbf{p}}^*)-C(\mathbf{p}^*)|\leq \max\left\{\mathbf{E}_{\widehat{\mathbf{p}}^*}[|\widehat{\lambda}-\lambda|],\mathbf{E}_{\mathbf{p}^*}[|\widehat{\lambda}-\lambda|]\right\},\label{le_eq_ob}
\end{align}
where 
\begin{align}
    &\mathbf{E}_{\widehat{\mathbf{p}}^*}[|\widehat{\lambda}-\lambda|]=\sum_{j=1}^{N}\widehat{p}_j^*|\widehat{\lambda}_j-\lambda_j|,\\
    &\mathbf{E}_{\mathbf{p}^*}[|\widehat{\lambda}-\lambda|]=\sum_{j=1}^{N}p_j^*|\widehat{\lambda}_j-\lambda_j|.
\end{align}
\end{lemma}
\begin{proof}
Since functions $C(\mathbf{p})$ and $G(\mathbf{p})$ reach their global minimums at points $\mathbf{p}^*$ and $\widehat{\mathbf{p}}^*$ respectively, then we have
\begin{align}
    &C(\widehat{\mathbf{p}}^*)\geq C(\mathbf{p}^*),\\
    &G(\widehat{\mathbf{p}}^*) \leq G({\mathbf{p}}^*).
\end{align}
Besides, we also have another relation:
\begin{align}
    |C(\mathbf{p})-G(\mathbf{p})|=\sum_{j=1}^{N}p_{j}|(\widehat{\lambda}_j-\lambda_j)|,
\end{align}
where $\|\cdot\|_{\infty}$ denotes the maximum norm.

Combining the above inequalities, we have the following result:
\begin{align}
C(\mathbf{p}^*)\leq C(\widehat{\mathbf{p}}^*)\leq G(\widehat{\mathbf{p}}^*)+\mathbf{E}_{\widehat{\mathbf{p}}^*}[|\widehat{\lambda}-\lambda|]\leq G({\mathbf{p}}^*)+\mathbf{E}_{\widehat{\mathbf{p}}^*}[|\widehat{\lambda}-\lambda|]\leq C(\mathbf{p}^*)+\mathbf{E}_{\widehat{\mathbf{p}}^*}[|\widehat{\lambda}-\lambda|]+\mathbf{E}_{\mathbf{p}^*}[|\widehat{\lambda}-\lambda|].
\end{align}
Then the inequality in Eq.~\eqref{le_eq_ob} is proved. 
\end{proof}
\renewcommand\theproposition{\ref{theoremforgibbsstatepreparation}}
\setcounter{proposition}{\arabic{proposition}-1}
\begin{proposition}[Correctness]
Consider a parameterized Hamiltonian $H(\bm\nu)$ and its Gibbs state $\rho_{\beta}(\bm\nu)$. Suppose the $U(\bm\theta)$ from SVQE (cf. Algorithm~\ref{alg:svqe}) and $\widehat{\mathbf{p}}^*$ from log-partition function estimation procedure (cf. Algorithm~\ref{alg:lpfe}) are optimal. Define a density operator $\rho_{\beta}^*$ as follows:
\begin{align}
    \rho_{\beta}^*\coloneqq\sum_{j=1}^{N}\widehat{p}_{j}^*\cdot U(\bm\theta)\op{\psi_j}{\psi_j}U^{\dagger}(\bm\theta).
    \label{equationapproximategibbsstate}
\end{align}
where $\{\ket{\psi}_j\}$ denote the computational basis. Denote the estimated eigenvalues by $\widehat{\bm\lambda}$, where $\widehat{\lambda}_{j}=\bra{\psi_{j}}U^{\dagger}(\bm\theta)H(\bm\nu)U(\bm\theta)\ket{\psi_j}$. Then, $\rho_{\beta}^*$ is an approximate of $\rho_{\beta}(\bm\nu)$ in the sense that
\begin{align}
    D(\rho_{\beta}^*,\rho_{\beta}(\bm\nu))\leq\sqrt{2\beta\max\left\{\mathbf{E}_{\widehat{\mathbf{p}}^*}[|\widehat{\lambda}-\lambda|],\mathbf{E}_{\mathbf{p}^*}[|\widehat{\lambda}-\lambda|]\right\}}.
\end{align}
where $D(\cdot,\cdot)$ denotes the trace distance, $\bm\lambda$ represent $H(\bm\nu)$'s true eigenvalues.
\end{proposition}

\begin{proof}
Recalling the expressions of $C({\bf p}^*)$ and $G(\widehat{\bf p}^*)$ in Eqs.~\eqref{op_partion}-\eqref{loss_g}, it is easy to verify the following inequalities:
\begin{align}
    &F(\rho_{\beta}(\bm\nu))=C({\bf p}^*),\\
    &F(\rho_{\beta}^*)=G(\widehat{\bf p}^*).
\end{align}
where $F$ denotes the free energy, i.e., $F(\rho)=\tr(H\rho)-\beta^{-1}S(\rho)$. 

Using the result in Lemma~\ref{le_re_eigen}, we will obtain the following inequality.
\begin{align}
    |F(\rho_{\beta}^*)-F(\rho_{\beta}(\bm\nu))|=|G(\widehat{\bf p}^*)-C({\bf p}^*)|\leq\max\left\{\mathbf{E}_{\widehat{\mathbf{p}}^*}[|\widehat{\lambda}-\lambda|],\mathbf{E}_{\mathbf{p}^*}[|\widehat{\lambda}-\lambda|]\right\}.\label{freeenergyvarepsilon}
\end{align}
In the meanwhile, a property of the free energy says that
\begin{align}
    F(\rho_{\beta}^*)=F(\rho_{\beta}(\bm\nu))+\beta^{-1}S(\rho_{\beta}^*\|\rho_{\beta}(\bm\nu)).
\end{align}
where $S(\rho_{\beta}^*\|\rho_{\beta}(\bm\nu))$ is the relative entropy. 
Rewriting the above equation as follows:
\begin{align}
    F(\rho_{\beta}^*)-F(\rho_{\beta}(\bm\nu))=\beta^{-1}S(\rho_{\beta}^*\|\rho_{\beta}(\bm\nu)).\label{freeenergyrelativee3ntropy}
\end{align}
Combining the relations in Eqs.~\eqref{freeenergyvarepsilon} and \eqref{freeenergyrelativee3ntropy}, we obtain the following inequality:
\begin{align}
    S(\rho_{\beta}^*\|\rho_{\beta}(\bm\nu))\leq \beta\max\left\{\mathbf{E}_{\widehat{\mathbf{p}}^*}[|\widehat{\lambda}-\lambda|],\mathbf{E}_{\mathbf{p}^*}[|\widehat{\lambda}-\lambda|]\right\}.
\end{align}
Lastly, according to Pinsker's inequality, the above inequality immediately leads to a bound on the trace distance between $\rho_{\beta}$ and $\rho_{\beta}^*$ in the sense that
\begin{align}
     D(\rho_{\beta}^*,\rho_{\beta}(\bm\nu))\leq\sqrt{2 S(\rho_{\beta}^*\|\rho_{\beta}(\bm\nu))}\leq\sqrt{2\beta\max\left\{\mathbf{E}_{\widehat{\mathbf{p}}^*}[|\widehat{\lambda}-\lambda|],\mathbf{E}_{\mathbf{p}^*}[|\widehat{\lambda}-\lambda|]\right\}}.
\end{align}
The the claimed is proved.
\end{proof}

\section{Proof for Proposition~\ref{the_gradient}}\label{sec_theorem_6}
\renewcommand\theproposition{\ref{the_gradient}}
\setcounter{proposition}{\arabic{proposition}-1}
\begin{proposition}[Sample complexity]
Given $\epsilon>0$ and $\eta\in(0,1)$, Algorithm~\ref{alg:grad} can compute an estimate for the gradient $\nabla L(\bm\nu)$ up to precision $\epsilon$ with probability larger than $1-\eta$. Particularly, the overall number of samples is $KD=O(\beta^2\log(m/\eta)/\epsilon^2)$ with $K=O(\beta^2/\epsilon^2)$ and $D=O(\log(2m/\eta))$. Besides, the total number of measurements is $O( KD\cdot m\beta^2(n+\log(m/\eta))/\epsilon^2)$.


\end{proposition}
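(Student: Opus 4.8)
The plan is to track, separately, the two sources of randomness in Algorithm~\ref{alg:grad}: the importance sampling of indices according to $\widehat{\mathbf{p}}^*$, and the finite-shot measurement of each Pauli expectation. Under the optimality hypothesis on $U(\bm\theta)$ and $\widehat{\mathbf{p}}^*$, the quantity targeted by the $\ell$-th component is the exact gradient entry $-\beta\tr(\rho_\beta(\bm\nu)E_\ell)+\beta e_\ell$, where (by Proposition~\ref{theoremforgibbsstatepreparation}, whose bound vanishes when $\widehat{\bm\lambda}=\bm\lambda$) one has $\tr(\rho_\beta(\bm\nu)E_\ell)=\sum_{j=1}^N \widehat{p}_j^*\,\bra{\psi_j}U^\dagger(\bm\theta)E_\ell U(\bm\theta)\ket{\psi_j}$. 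I would recast this sum as the mean $\mathbf{E}[Z_\ell]$ of a random variable $Z_\ell$ taking the value $\bra{\psi_j}U^\dagger(\bm\theta)E_\ell U(\bm\theta)\ket{\psi_j}$ with probability $\widehat{p}_j^*$. Since $\beta e_\ell$ is an exact constant, a target precision $\epsilon$ on the gradient component reduces to precision $\epsilon/\beta$ on $\mathbf{E}[Z_\ell]$, which I would split into an $\epsilon/(2\beta)$ budget for the sampling error and an $\epsilon/(2\beta)$ budget for the accumulated measurement error.

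For the sampling error, the key input is that each $E_\ell$ is a Pauli tensor product with eigenvalues $\pm1$, so $|\bra{\psi_j}U^\dagger(\bm\theta)E_\ell U(\bm\theta)\ket{\psi_j}|\le 1$ and hence $\mathbf{Var}[Z_\ell]\le\mathbf{E}[Z_\ell^2]\le 1$. Chebyshev's inequality then shows that the sample mean over $K$ importance samples attains precision $\epsilon/(2\beta)$ with probability at least $2/3$ once $K=O(\mathbf{Var}[Z_\ell]/(\epsilon/\beta)^2)=O(\beta^2/\epsilon^2)$, which is exactly the claimed $K$. Taking the median over $D$ independent rounds and invoking Chernoff bounds boosts this to failure probability below $\eta/(2m)$ per component with $D=O(\log(2m/\eta))$; a union bound over the $m$ components then caps the total sampling-induced failure at $\eta/2$, giving the stated $KD=O(\beta^2\log(m/\eta)/\epsilon^2)$.

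For the measurement cost, each value $\bra{\psi}U^\dagger(\bm\theta)E_\ell U(\bm\theta)\ket{\psi}$ is the expectation of a $\pm1$-valued outcome, i.e.\ the $m=1$, $\|\bm\nu\|_1=1$ instance of Lemma~\ref{le_ham_mean}, so estimating it to precision $\epsilon/(2\beta)$ with failure probability $\eta'$ costs $O(\beta^2\log(1/\eta')/\epsilon^2)$ shots. To control all of these simultaneously I would, as in the proof of Proposition~\ref{theorem_SVQE_pre}, demand accuracy for every one of the at most $2^n$ computational-basis states that could be drawn (the indices $l_j^s$ being themselves random), which forces $\eta'=O(\eta/(m\,2^{n}))$ and hence $\log(1/\eta')=O(n+\log(m/\eta))$. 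This yields $O(\beta^2(n+\log(m/\eta))/\epsilon^2)$ measurements per evaluation; multiplying by the $m$ components and the $KD$ samples reused across components gives the announced $O(KD\cdot m\beta^2(n+\log(m/\eta))/\epsilon^2)$ total, while a second union bound keeps the measurement-induced failure below $\eta/2$, so the overall success probability exceeds $1-\eta$.

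The routine parts are the Chebyshev and Chernoff estimates, which mirror Propositions~\ref{theorem_logpartitionfunction_pre} and~\ref{theorem_SVQE_pre} almost verbatim. The delicate step, and the one I would treat most carefully, is the two-level budgeting of error and failure probability: I must ensure the per-shot precision $\epsilon/(2\beta)$ aggregates correctly under averaging so that the combined sampling-plus-measurement error stays within $\epsilon/\beta$, and I must justify the union bound over all $2^n$ basis states that produces the $n$-dependence in the shot count while keeping the $2^n$ factor from leaking into the (dimension-free) sample complexity.
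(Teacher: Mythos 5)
Your proposal is correct and follows essentially the same route as the paper's proof: the same random variable $Z_\ell$ with mean $\sum_j \widehat{p}_j^*\bra{\psi_j}U^{\dagger}(\bm\theta)E_\ell U(\bm\theta)\ket{\psi_j}$, the same Chebyshev-plus-median-of-means argument giving $K=O(\beta^2/\epsilon^2)$ and $D=O(\log(2m/\eta))$ with a union bound over the $m$ components, and the same per-shot analysis in which the failure probability is suppressed to $O(\eta/(m\,2^{n}))$ over all $2^n$ possible basis states, yielding the $n+\log(m/\eta)$ factor in the measurement count. If anything, your explicit two-level splitting of the $\epsilon/\beta$ budget into sampling and measurement halves, and your appeal to Proposition~\ref{theoremforgibbsstatepreparation} to justify that the sampled mean targets the true gradient entry, are stated more carefully than in the paper itself.
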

\begin{proof}
Let $Z_{\ell}$ denote the random variable that takes value $\bra{\psi_j}U^{\dagger}(\bm\theta)E_{\ell}U(\bm\theta)\ket{\psi_j}$ with probability $\widehat{p}_{j}^*$, for all $\ell=1,...,m$. Then we have
\begin{align}
    \mathbf{E}[Z_{\ell}]=\sum_{j=1}^N \widehat{p}_{j}^*\cdot\bra{\psi_j}U^{\dagger}(\bm\theta)E_{\ell}U(\bm\theta)\ket{\psi_j}.\label{grad_mean}
\end{align}
Thus partial derivative can be computed in the following way
\begin{align}
    \frac{\partial L(\bm\nu)}{\partial \nu_{\ell}}\approx -\beta \mathbf{E}[Z_{\ell}]+\beta e_{\ell}.
\end{align}
It implies that the estimate's error can be set as $\epsilon/\beta$ to ensure the gradient's maximal error less than $\epsilon$.

Next, we determine the number of samples such that the overall failure probability for estimating the gradient is less than $\delta$. Since the gradient has $m$ partial derivatives, corresping to $\mathbf{E}[Z_{\ell}]$, thus it suffices to estimate each with probability larger than $1-\delta/m$. Meanwhile, each mean $\mathbf{E}[Z_{\ell}]$ can be computed by sampling. Notice that all $|Z_{\ell}|\leq1$, by Chebyshev's inequality, then it suffices to take $K=O(\beta^2/\epsilon^2)$ samples to compute an estimate for each $\mathbf{E}[Z_{\ell}]$ with precision $\epsilon/2\beta$ and probability larger than $2/3$. Furthermore, by Chernoff bounds, the probability can be improved to $1-\eta/2m$ at an additional cost of multiplicative factor of $D=O(\log(2m/\eta))$. It is worth pointing out that, for each variable $Z_{\ell}$, the samples are taken according to the same probability distribution $\widehat{\mathbf{p}}^*$, thus it is natural to use the sampled states $\ket{\psi_{t_{j}^s}}$ (cf. Algorithm~\ref{alg:grad}) to compute all means $\mathbf{E}[Z_{\ell}]$. Then the total number of samples is $KD=O(\beta^2\log(m/\eta)/\epsilon^2)$.


On the other hand, each value $\bra{\psi_j}U^{\dagger}(\bm\theta)E_{\ell}U(\bm\theta)\ket{\psi_j}$ in Eq.~\eqref{grad_mean} has to be computed by performing the measurement. Note that there are $2^{n}$ values $\bra{\psi_j}U^{\dagger}(\bm\theta)E_{\ell}U(\bm\theta)\ket{\psi_j}$ in all. To ensure the mean estimate's failure probability less than $\eta/2m$, it suffices to suppress each value's failure probability to $\eta/2^{n+1}m$. Following the same discussion in Lemma~\ref{le_ham_mean}, the estimate for value $\bra{\psi_j}U^{\dagger}(\bm\theta)E_{\ell}U(\bm\theta)\ket{\psi_j}$ can be computed up to precision $\epsilon/2\beta$ using $O(\beta^2\log(2^{n+1}m/\eta)/\epsilon^2)$ measurements. 

Regarding the failure probability, by union bound, the overall failure probability is at most $m\cdot(\eta/2m+KD\cdot\eta/2^{n+1}m)$, where $KD$ is the number of samples $KD=O(\beta^2\log(m/\eta)/\epsilon^2)$. Especially, for larger Hamiltonians, the number of measurements is usually less than the dimension $2^{n}$. Thus, the overall failre probability is less than $\eta$.

Lastly, the total number of measurements is given below:
\begin{align}
    m\cdot KD\cdot O(\beta^2\log(2^{n+1}m/\eta)/\epsilon^2)=O(m\beta^4\log(m/\eta)\log(2^{n+1}m/\eta)/\epsilon^4).
\end{align}

\end{proof}
\end{document}